\newtheorem{theorem}{Theorem}
\newtheorem{corollary}[theorem]{Corollary}
\newtheorem{definition}[theorem]{Definition}
\newtheorem{example}[theorem]{Example}
\newtheorem{lemma}[theorem]{Lemma}
\newtheorem{proposition}[theorem]{Proposition}
\newenvironment{proof}[1][Proof]{\noindent\textbf{#1.} }{\ \rule{0.5em}{0.5em}}
\renewcommand{\P}{{\hat P}}
\newcommand{\Ga}{\Gamma}
\newcommand\ld{\lambda}
\newcommand\Si{\Sigma}
\newcommand\De{\Delta}
\newcommand{\sa}{{\rm sa}}
\newcommand{\A}{{\hat A}}
\newcommand{\Hi}{{\cal H}}
\newcommand\BH{\mathcal{B(H)}}
\newcommand\PV{\mathcal{P}(V)}
\newcommand\bra[1]{\langle #1|\,}
\newcommand\ket[1]{\,|#1\rangle}
\newcommand\eq[1]{(\ref{#1})}
\newcommand\SAin[1]{\mbox{``}A\,\varepsilon\,#1\mbox{''}}
\newcommand\Ain[1]{A\,\varepsilon\,#1}
\newcommand\das[1]{\delta(\hat{#1})}            
\newcommand\dastoo[2]{\delta^o(\hat{#2})_{#1}}  
\newcommand\dastoi[2]{\delta^i(\hat{#2})_{#1}}  
\newcommand\dasB[1]{\breve{\delta}(#1)}
\newcommand\spec[1]{{\rm sp}(#1)}
\newcommand\ps[1]{\underline{#1}}        
\newcommand{\Sig}{\ps{\Sigma}}            
\newcommand\Sub[1]{{\rm Sub}(#1)}              
\newcommand\Subcl[1]{{\rm Sub}_{{\rm cl}}(#1)} 
\newcommand\Set{{\bf Sets}}                    
\newcommand\SetC[1]{\Set^{{#1}^{\rm op}}}      
\newcommand\N{\mathcal{N}}
\newcommand\fu[1]{\overline{#1}}							
\newcommand\SetVNop{\SetC{\mathcal{V(N)}}}		
\newcommand\VN{\mathcal{V(N)}}
\newcommand\on[1]{\operatorname{#1}}
\newcommand\Q{\hat Q}
\newcommand\bbC{\mathbb{C}}
\newcommand\bbR{\mathbb{R}}
\newcommand\PN{\mathcal{P}(\N)}
\newcommand\mc[1]{\mathcal{#1}}
\newcommand\Rlr{\ps{\bbR^\leftrightarrow}}
\newcommand\wpsi{\mathfrak{w}^\psi}
\newcommand\dasout[2]{\delta^o(#1)_{#2}}
\newcommand\dasinn[2]{\delta^i(#1)_{#2}}
\begin{document}

\title{The physical interpretation of daseinisation\footnote{Based on joint work with Chris J. Isham.}}
\author{Andreas D\"oring\footnote{andreas.doering@comlab.ox.ac.uk}\\
Computing Laboratory, University of Oxford\\}
\date{April, 2010}
\maketitle

\begin{abstract}
We provide a conceptual discussion and physical interpretation of some of the quite abstract constructions in the topos approach to physics. In particular, the daseinisation process for projection operators and for self-adjoint operators is motivated and explained from a physical point of view. Daseinisation provides the bridge between the standard Hilbert space formalism of quantum theory and the new topos-based approach to quantum theory. As an illustration, we will show all constructions explicitly for a three-dimensional Hilbert space and the spin-$z$ operator of a spin-$1$ particle. This article is a companion to the article by Isham in the same volume \cite{Ish10}.
\end{abstract}

\section{Introduction}	\label{S1}
\paragraph{The topos approach.} The topos approach to quantum theory was initiated by Isham in \cite{Isham97} and Butterfield and Isham in \cite{IB98,IB99,IB00,IB02}. It was developed and broadened into an approach to the formulation of physical theories in general by Isham and the author in \cite{DI(1),DI(2),DI(3),DI(4)}. The long article \cite{DI08} gives a more-or-less exhaustive\footnote{and probably exhausting...} and coherent overview of the approach. More recent developments are the description of arbitrary states by probability measures \cite{Doe08} and further developments \cite{Doe09} concerning the new form of quantum logic that constitutes a central part of the topos approach. For background, motivation and the main ideas see also Isham's article in this volume \cite{Ish10}.

Most of the work so far has been done on standard non-relativistic quantum theory. A quantum system is described by its \emph{algebra of physical quantities}. Often, this can be assumed to be $\BH$, the algebra of all bounded operators on the (separable) Hilbert space $\Hi$ of the system. More generally, one can use a suitable operator algebra. For conceptual and pragmatical reasons, we assume that the algebra of physical quantities is a \emph{von Neumann algebra $\N$} (see e.g. \cite{KR83a,Tak79}). For our purposes, this poses no additional technical difficulty. Physically, it allows the description of quantum systems with symmetry and/or superselection rules. The reader unfamiliar with von Neumann algebras can always assume that $\N=\BH$.

Quantum theory usually is identified with the Hilbert space formalism together with some interpretation. This works fine in a vast number of applications. Moreover, the Hilbert space formalism is very rigid. One cannot just change some part of the structure since this typically brings down the whole edifice. Yet, there are serious conceptual problems with the usual instrumentalist interpretations of quantum theory which become even more severe when one tries to apply quantum theory to gravity and cosmology. For a discussion of some of these conceptual problems, see Isham's article in this volume.

The topos approach provides not merely another interpretation of the Hilbert space formalism, but a mathematical reformulation of quantum theory, based upon structural and conceptual considerations. The resulting formalism is \emph{not} a Hilbert space formalism. The fact that such a reformulation is possible at all is somewhat surprising. (Needless to say, many open questions remain.)

\paragraph{Daseinisation.} In this article, we mainly focus on how the new topos formalism relates to the standard Hilbert space formalism. The main ingredient is the process that we coined \emph{daseinisation}. It will be shown how daseinisation relates familiar structures in quantum theory to structures within the topos associated with a quantum system. Hence, daseinisation gives the `translation' from the ordinary Hilbert space formalism to the topos formalism.

In our presentation, we will keep the use of topos theory to an absolute minimum and do not assume any familiarity with category theory beyond the very basics. Some notions and results from functional analysis, e.g. the spectral theorem, will be used. As an illustration, we will show how all constructions look like concretely for the algebra $\N=\mc B (\bbC^3)$ and the spin-$z$ operator, $\hat S_z\in\mc B (\bbC^3)$, of a spin-$1$ particle. 

In fact, there are two processes called daseinisation. The first is \emph{daseinisation of projections}, which maps projection operators to certain subobjects of the state object. (The state object and its subobjects will be defined below.) This provides the bridge from ordinary Birkhoff-von Neumann quantum logic to a new form of quantum logic that is based upon the internal logic of a topos associated with the quantum system. The second is \emph{daseinisation of self-adjoint operators}, which maps each self-adjoint operator to an arrow in the topos. Mathematically, the two forms of daseinisation are related, but conceptually they are quite different.

The topos approach emphasises the r\^ole of \emph{classical perspectives} onto a quantum system. A classical perspective or \emph{context} is nothing but a set of commuting physical quantities, or more precisely the abelian von Neumann algebra generated by such a set. One of the main ideas is that \emph{all} classical perspectives should be taken into account simultaneously. But given e.g. a projection operator, which represents a proposition about the value of a physical quantity in standard quantum theory, one is immediately faced with the problem that the projection operator is contained in some contexts, but not in all. The idea is to approximate the projection in all those contexts that do not contain it. Likewise, self-adjoint operators, which represent physical quantities, must be approximated suitably to all contexts. Daseinisation is nothing but a process of systematic approximation to all classical contexts.

This article is organised as follows: in section \ref{S2}, we discuss propositions and their representation in classical physics and standard quantum theory. Section \ref{S3} presents some basic structures in the topos approach to quantum theory. In section \ref{S4}, the representation of propositions in the topos approach is discussed -- this is via daseinisation of projections. Section \ref{S5}, which is the longest and most technical, shows how physical quantities are represented in the topos approach; here we present daseinisation of self-adjoint operators is discussed. Section \ref{S6} concludes.

\section{\mbox{Propositions and their representation in}\\ classical physics and standard quantum\\ theory}	\label{S2}
\paragraph{Propositions.} Let $S$ be some physical system, and let $A$ denote a physical quantity pertaining to the system $S$ (e.g. position, momentum, energy, angular momentum, spin...). We are concerned with propositions of the form ``the physical quantity $A$ has a value in the set $\De$ of real numbers'', for which we use the shorthand notation ``$\Ain{\De}$''. In all applications, $\De$ will be a Borel subset of the real numbers $\bbR$.

Arguably, physics is fundamentally about what we can say about the truth values of propositions of the form ``$\Ain{\De}$'' (where $A$ varies over the physical quantities of the system and $\De$ varies over the Borel subsets of $\bbR$) when the system is in a given state. Of course, the state may change in time and hence the truth values also will change in time.\footnote{One could also think of propositions about \emph{histories} of a physical system $S$ such that propositions refer to multiple times, and the assignment of truth values to such propositions (in a given, evolving state). For some very interesting recent results on a topos formulation of histories see Flori's recent article \cite{Flo08}.}

Speaking about propositions like ``$\Ain{\De}$'' requires a certain amount of conceptualising. We accept that it is sensible to talk about physical systems as suitably separated entities, that each such physical system is characterised by its physical quantities, and that the range of values of a physical quantity $A$ is a subset of the real numbers. If we want to assign truth values to propositions, then we need the concept of a state of a physical system $S$, and the truth values of propositions will depend on the state. If we regard these concepts as natural and basic (and potentially as prerequisites for doing physics at all), then a proposition ``$\Ain{\De}$'' \emph{refers to the world} in the most direct conceivable sense.\footnote{Of course, as explained by Isham in his article \cite{Ish10}, it is one of the central motivations for the whole topos programme to find a mathematical framework for physical theories that is not depending fundamentally on the real numbers. In particular, the premise that physical quantities have real values is doubtful in this light. How this seeming dilemma is solved in the topos approach to quantum theory will become clear later in section \ref{Sec_RepOfPhysQuants}.} We will deviate from common practice, though, by insisting that also in quantum theory, a proposition is about `how things are', and is not to be understood as a counterfactual statement, i.e., it is not merely about what we would obtain as a measurement result if we were to perform a measurement.

\paragraph{Classical physics, state spaces and realism.} In classical physics, a state of the system $S$ is represented by an element of a set, namely by a point of the state space $\mathcal{S}$ of the system.\footnote{We avoid the usual synonym `phase space', which seems to be a historical misnomer. In any case, there are no phases in a phase space.} Each physical quantity $A$ is represented by a real-valued function $f_A:\mathcal{S}\rightarrow\bbR$ on the state space. A proposition ``$\Ain{\De}$'' is represented by a certain subset of the state space, namely the set $f_A^{-1}(\De)$. We assume that $f_A$ is (at least) measurable. Since $\De$ is a Borel set, the subset $f_A^{-1}(\De)$ of state space representing ``$\Ain{\De}$'' is a Borel set, too.

In any state, each physical quantity $A$ has a value, which is simply given by evaluation of the function $f_A$ representing $A$ at the point $s\in\mc{S}$ of state space representing the state, i.e., the value is $f_A(s)$. Every proposition ``$\Ain{\De}$'' has a Boolean truth value in each given state $s\in\mc{S}$:
\begin{equation}
			v(\SAin{\De};s)=\left\{
			\begin{tabular}
						[c]{ll}%
						$true$ & if $s\in f_ A^{-1}(\De)$\\
						$false$ & if $s\notin f_A^{-1}(\De)$.
			\end{tabular}
			\ \right.
\end{equation}
Let ``$\Ain{\De}$'' and ``$B\varepsilon\Gamma$'' be two different propositions, represented by the Borel subsets $f_A^{-1}(\De)$ and $f_B^{-1}(\Gamma)$ of $\mc{S}$, respectively. The union $f_A^{-1}(\De)\cup f_B^{-1}(\Gamma)$ of the two Borel subsets is another Borel subset, and this subset represents the proposition ``$\Ain{\De}$ or $B\varepsilon\Gamma$'' (disjunction). Similarly, the intersection $f_A^{-1}(\De)\cap f_B^{-1}(\Gamma)$ is a Borel subset of $\mc{S}$, and this subset represents the proposition ``$\Ain{\De}$ and $B\varepsilon\Gamma$'' (conjunction). Both conjunction and disjunction can be extended to arbitrary countable families of Borel sets representing propositions. The set-theoretic operations of taking unions and intersections distribute over each other. Moreover, the negation of a proposition ``$\Ain{\De}$'' is represented by the complement $\mc{S}\backslash f_A^{-1}(\De)$ of the Borel set $f_A^{-1}(\De)$. Clearly, the empty subset $\emptyset$ of $\mc{S}$ represents the trivially false proposition, while the maximal Borel subset, $\mc{S}$ itself, represents the trivially true proposition. The set $\mc{B(S)}$ of Borel subsets of the state space $\mc{S}$ of a classical system thus is a Boolean $\sigma$-algebra, i.e., a $\sigma$-complete distributive lattice with complement. Stone's theorem shows that every Boolean algebra is isomorphic to the algebra of subsets of a suitable space, so Boolean logic is closely tied to the use of sets.

The fact that in a given state $s$ each physical quantity has a value and each proposition has a truth value makes classical physics a \emph{realist} theory.\footnote{Here, we could enter into an interesting, but potentially never-ending discussion on physical reality, ontology and epistemology etc. We avoid that and posit that for the purpose of this paper, a realist theory is one that, in any given state, allows to assign truth values to all propositions of the form ``$\Ain{\De}$''. Moreover, we require that there is a suitable logical structure, in particular a deductive system, in which we can argue about (representatives of) propositions.}

\paragraph{Representation of propositions in standard quantum theory.} In contrast to that, in quantum physics it is not possible to assign real values to all physical quantities at once. This is the content of the Kochen-Specker theorem. In the standard Hilbert space formulation of quantum theory, each physical quantity $A$ is represented by a self-adjoint operator $\A$ on some Hilbert space $\Hi$. The range of possible real values that $A$ can take is given by the spectrum $\spec{\A}$ of the operator $\A$. Of course, there is a notion of states in quantum theory: in the simplest version, they are given by unit vectors in the Hilbert space $\Hi$. There is a particular mapping taking self-adjoint operators and unit vectors to real numbers, namely the evaluation
\begin{equation}
			(\A,\ket\psi)\longmapsto\bra\psi\A\ket\psi.
\end{equation}
In general (unless $\ket\psi$ is an eigenstate of $\A$), this real value is \emph{not} the value of the physical quantity $A$ in the state described by $\psi$. It rather is the expectation value, which is a statistical and instrumentalist notion. The physical interpretation of the mathematical formalism of quantum theory fundamentally depends on measurements and observers.

According to the spectral theorem, propositions like ``$\Ain{\De}$'' are represented by projection operators $\P=\hat E[\Ain{\De}]$. Each projection $\P$ corresponds to a unique closed subspace $U_\P$ of the Hilbert space $\Hi$ of the system and vice versa. The intersection of two closed subspaces is a closed subspace, which can be taken as the definition of a conjunction. The closure of the subspace generated by two closed subspaces is a candidate for the disjunction, and the function sending a projection $\P$ to $\hat 1-\P$ is an orthocomplement. Birkhoff and von Neumann suggested in their seminal paper \cite{BvN36} to interpret these mathematical operations as providing a logic for quantum systems.

At first sight, this looks similar to the classical case: the Hilbert space $\Hi$ now takes the r\^ole of a state space, while its closed subspaces represent propositions. But there is an immediate, severe problem: if the Hilbert space $\Hi$ is at least two-dimensional, then the lattice $\mc{L(H)}$ of closed subspaces is \emph{non-distributive} (and so is the isomorphic lattice of projections). This makes it very hard to find a proper semantics of quantum logic.\footnote{We cannot discuss the merits and shortcomings of quantum logic here. Suffice it to say that there are more conceptual and interpretational problems, and a large number of developments and abstractions from standard quantum logic, i.e., the lattice of closed subspaces of Hilbert space. An excellent review is \cite{DCG02}.}

\section{Basic structures in the topos approach to quantum theory}	\label{S3}
\subsection{Contexts}
As argued by Isham in this volume, the topos formulation of quantum theory is based upon the idea that one takes the collection of \emph{all} classical perspectives on a quantum system. No single classical perspective can deliver a complete picture of the quantum system, but the collection of all of them may well do. As we will see, it is also of great importance how the classical perspectives relate to each other.

These ideas are formalised in the following way: we consider a quantum system as being described by a \emph{von Neumann algebra} $\N$ (see e.g. \cite{KR83a}). Such an algebra is always given as a subalgebra of $\BH$, the algebra of all bounded operators on a suitable Hilbert space $\Hi$. We can always assume that the identity in $\N$ is the identity operator $\hat 1$ on $\Hi$. Since von Neumann algebras are much more general than just the algebras of the form $\BH$, and since all our constructions work for arbitrary von Neumann algebras, we present the results in this general form. For a very good introduction to the use of operator algebras in quantum theory see \cite{Emch84}. Von Neumann algebras can be used to describe quantum systems with symmetry and/or superselection rules. Throughout, we will use the very simple example of $\N=\mc{B}(\bbC^3)$ to illustrate the constructions.

We assume that the self-adjoint operators in the von Neumann algebra $\N$ associated with our quantum system represent the \emph{physical quantities} of the system. Since a quantum system has non-commuting physical quantities, the algebra $\N$ is \emph{non-abelian}. The $\bbR$-vector space of self-adjoint operators in $\N$ is denoted as $\N_\sa$.

A classical perspective is given by a collection of commuting physical quantities. Such a collection determines an \emph{abelian subalgebra}, typically denoted as $V$, of the non-abelian von Neumann algebra $\N$. An abelian subalgebra is often called a \emph{context}, and we will use the notions classical perspective, context and abelian subalgebra synonymously. We will only consider abelian subalgebras that
\begin{itemize}
	\item[(a)] are von Neumann algebras, i.e., they are closed in the weak topology. The technical advantage is that the spectral theorem holds in both $\N$ and its abelian von Neumann subalgebras, and that the lattices of projections in $\N$ and its abelian von Neumann subalgebras are complete;
	\item[(b)] contain the identity operator $\hat 1$.
\end{itemize}
Given a von Neumann algebra $\N$, let $\VN$ be the set of all its abelian von Neumann subalgebras which contain $\hat 1$. By convention, the trivial abelian subalgebra $V_0=\bbC\hat 1$ is not contained in $\VN$. If some subalgebra $V'\in\VN$ is contained in a larger subalgebra $V\in\VN$, then we denote the inclusion as $i_{V'V}:V'\rightarrow V$. Clearly, $\VN$ is a partially ordered set under inclusion, and as such is a simple kind of \emph{category} (see e.g. \cite{McL98}). The objects in this category are the abelian von Neumann subalgebras, and the arrows are the inclusions. Clearly, there is at most one arrow between any two objects. We call $\VN$ the \emph{category of contexts} of the quantum system described by the von Neumann algebra $\N$.

Considering the abelian parts of a non-abelian structure may seem trivial, yet in fact it is not, since the context category $\VN$ keeps track of the relations between contexts: whenever two abelian subalgebras $V,\tilde V$ have a non-trivial intersection, then there are inclusion arrows
\begin{equation}
			V\longleftarrow V\cap\tilde V\longrightarrow\tilde V
\end{equation}
in $\VN$. Every self-adjoint operator $\A\in V\cap\tilde V$ can be written as $g(\hat B)$ for some $\hat B\in V_{\sa}$, where $g:\bbR\rightarrow\bbR$ is a Borel function, and, at the same time, as another $h(\hat C)$ for some $\hat C\in\tilde V_\sa$ and some other Borel function $h$.\footnote{This follows from the fact that each abelian von Neumann algebra $V$ is generated by a single self-adjoint operator, see e.g. Prop. 1.21 in \cite{Tak79}. A Borel function $g:\bbR\rightarrow\bbC$ takes a self-adjoint operator $\A$ in a von Neumann algebra to another operator $g(\A)$ in the same algebra. If $g$ is real-valued, then $g(\A)$ is self-adjoint. For details, see \cite{KR83a,Tak79}.} The point is that while $\A$ commutes with $\hat B$ and $\A$ commutes with $\hat C$, the operators $\hat B$ and $\hat C$ do not necessarily commute. In that way, the context category $\VN$ encodes a lot of information about the algebraic structure of $\N$, not just between commuting operators, and about the relations between contexts.

If $V'\subset V$, then the context $V'$ contains less self-adjoint operators and less projections than the context $V$, so one can describe less physics from the perspective of $V'$ than from the perspective of $V$. The step from $V$ to $V'$ hence involves a suitable kind of \emph{coarse-graining}. We will see later how daseinisation of projections resp. self-adjoint operators implements this informal idea of coarse-graining.

\begin{example}			\label{Ex1}
Let $\Hi=\bbC^3$, and let $\N=\mc{B}(\bbC^3)$, the algebra of all bounded linear operators on $\bbC^3$. $\mc{B}(\bbC^3)$ is the algebra $M_3(\bbC)$ of $3\times 3$-matrices with complex entries, acting as linear transformations on $\bbC^3$. Let $(\psi_1,\psi_2,\psi_3)$ be an orthonormal basis of $\bbC^3$, and let $(\P_1,\P_2,\P_3)$ be the three projections onto the one-dimensional subspaces $\bbC\psi_1,\bbC\psi_2$ and $\bbC\psi_3$, respectively. Clearly, the projections $\P_1,\P_2,\P_3$ are pairwise orthogonal, i.e., $\P_i\P_j=\delta_{ij}\P_i$.

There is an abelian subalgebra $V$ of $\mc{B}(\bbC^3)$ generated by the three projections. One can use von Neumann's double commutant construction and define $V=\{\P_1,\P_2,\P_3\}''$ (see \cite{KR83a}). More concretely, $V=\rm{lin}_{\bbC}(\P_1,\P_2,\P_3)$. Even more explicitly, one can pick the matrix representation of the projections $\P_i$ such that
\begin{equation}
			\P_1=
			\left(\begin{array}
						[c]{ccc}%
						1 & 0 & 0\\
						0 & 0 & 0\\
						0 & 0 & 0
			\end{array}\right),\ \ \
			\P_2=
			\left(\begin{array}
						[c]{ccc}%
						0 & 0 & 0\\
						0 & 1 & 0\\
						0 & 0 & 0
			\end{array}\right),\ \ \
			\P_3=
			\left(\begin{array}
						[c]{ccc}%
						0 & 0 & 0\\
						0 & 0 & 0\\
						0 & 0 & 1
			\end{array}\right).			
\end{equation}
The abelian algebra $V$ generated by these projections then consists of all diagonal $3\times 3$-matrices (with complex entries on the diagonal). Clearly, there is no larger abelian subalgebra of $\mc{B}(\bbC^3)$ that contains $V$, so $V$ is maximal abelian.

Every orthonormal basis $(\tilde\psi_1,\tilde\psi_2,\tilde\psi_3)$ determines a maximal abelian subalgebra $\tilde V$ of $\mc{B}(\bbC^3)$. There is some redundancy, though: if two bases differ only by a permutation and/or phase factors of the basis vectors, then they generate the same maximal abelian subalgebra. Evidently, there are uncountably many maximal abelian subalgebras of $\mc{B}(\bbC^3)$.

There also are non-maximal abelian subalgebras. Consider for example the algebra generated by $\P_1$ and $\P_2+\P_3$. This algebra is denoted as $V_{\P_1}$ and is given as $V_{\P_1}=\rm{lin}_\bbC(\P_1,\P_2+\P_3)=\bbC\P_1+\bbC(\hat 1-\P_1)$. Here, we use that $\P_2+\P_3=\hat 1-\P_1$. There are uncountably many non-maximal subalgebras of $\mc{B}(\bbC^3)$.

The trivial projections $\hat 0$ and $\hat 1$ are contained in every abelian subalgebra. An algebra of the form $V_{\P_1}$ contains two non-trivial projections: $\P_1$ and $\hat 1-\P_1$. There is the trivial abelian subalgebra $V_0=\bbC\hat 1$ consisting of multiples of the identity operator only. By convention, we will not consider the trivial algebra (it is not included in the partially ordered set $\VN$).

The set $\VN=\mc{V}(\mc{B}(\bbC^3))=:\mc{V}(\bbC^3)$ of abelian subalgebras of $\mc{B}(\bbC^3)$ hence can be divided into two subsets: the maximal abelian subalgebras, corresponding to orthonormal bases as described, and the ones of the form $V_{\P_1}$ that are generated by a single projection.\footnote{In other (larger) algebras than $\mc{B}(\bbC^3)$, the situation is more complicated, of course.} Clearly, given two different maximal abelian subalgebras $V$ and $\tilde V$, neither one contains the other. Similarly, if we have two non-maximal algebras $V_{\P_1}$ and $V_{\P_2}$, neither one will contain the other, with one exception: if $\P_2=\hat 1-\P_1$, then $V_{\P_1}=V_{\P_2}$. This comes from the more general fact that if a unital operator algebra contains an operator $\A$, then it also contains the operator $\hat 1-\A$.

Of course, non-maximal subalgebras can be contained in maximal ones. Let $V$ be the maximal abelian subalgebra generated by three pairwise orthogonal projections $\P_1,\P_2,\P_3$. Then $V$ contains three non-maximal abelian subalgebras, namely $V_{\P_1},V_{\P_2}$ and $V_{\P_3}$. Importantly, each non-maximal abelian subalgebra is contained in many different maximal ones: consider two projections $\Q_2,\Q_3$ onto one-dimensional subspaces in $\bbC^3$ such that $\P_1,\Q_2,\Q_3$ are pairwise orthogonal. We assume that $\Q_2,\Q_3$ are such that the maximal abelian subalgebra $\tilde V$ generated by $\P_1,\Q_2,\Q_3$ is different from the algebra $V$ generated by the three pairwise orthogonal projections $\P_1,\P_2,\P_3$. Then both $V$ and $\tilde V$ contain $V_{\P_1}$ as a subalgebra. This argument makes clear that $V_{\P_1}$ is contained in every maximal abelian subalgebra that contains the projection $\P_1$.

We take inclusion of smaller into larger abelian subalgebras (here: of non-maximal into maximal ones) as a partial order on the set $\mc{V}(\bbC^3)$.

Since $\bbC^3$ is finite-dimensional, the algebra $\mc{B}(\bbC^3)$ is both a von Neumann and a $C^*$-algebra. The same holds for the abelian subalgebras. We do not have to worry about weak closedness. On infinite-dimensional Hilbert spaces, these questions become important. In general, an abelian von Neumann subalgebra $V$ is generated by the collection $\PV$ of its projections (which all commute with each other, of course) via the double commutant construction.
\end{example}

\subsection{Gel'fand spectra and the spectral presheaf}
\paragraph{Introduction.} We will now make use of the fact that each abelian ($C^*$-) algebra $V$ of operators can be seen as an algebra of continuous functions on a suitable topological space, the \emph{Gel'fand spectrum} of $V$. At least locally, for each context, we thus obtain a mathematical formulation of quantum theory that is similar to classical physics, with a state space and physical quantities as real-valued functions on this space. The `local state spaces' are then combined into one global structure, the \emph{spectral presheaf}, which serves as a state space analogue for quantum theory.

\paragraph{Gel'fand spectra.} Let $V\in\VN$ be an abelian subalgebra of $\N$, and let $\Si_V$ denote its Gel'fand spectrum. $\Si_V$ consists of the multiplicative states on $V$, i.e., the positive linear functionals $\ld:V\longrightarrow\bbC$ of norm $1$ such that
\begin{equation}
			\forall\A,\hat B\in V:\ld(\A\hat B)=\ld(\A)\ld(\hat B).
\end{equation}
The elements $\ld$ of $\Si_V$ also are pure states of $V$ and algebra homomorphisms from $V$ to $\bbC$. (It is useful to have these different aspects in mind.) The set $\Si_V$ is a compact Hausdorff space in the weak$^*$ topology \cite{KR83a,Tak79}.

Let $\ld\in\Si_V$ be given. It can be shown that for any self-adjoint operator $\A\in V_\sa$, it holds that
\begin{equation}
			\ld(\A)\in\spec{\A},
\end{equation}
and every element $s\in\spec{\A}$ is given as $s=\tilde\ld(\A)$ for some $\tilde\ld\in\Si_V$. Moreover, if $g:\bbR\rightarrow\bbR$ is a Borel function, then
\begin{equation}			\label{FUNC}
			\ld(g(\A)=g(\ld(\A)).
\end{equation}
This implies that an element $\ld$ of the Gel'fand spectrum $\Si_V$ can be seen as a \emph{valuation} on $V_\sa$, i.e., a function sending each self-adjoint operator in $V$ to some element of its spectrum in such a way that equation \eq{FUNC}, the FUNC principle, holds. Within each context $V$, all physical quantities in $V$ (and only those!) can be assigned values at once, and the assignment of values commutes with taking (Borel) functions of the operators. Every element $\ld$ of the Gel'fand spectrum gives a different valuation on $V_\sa$.

The well-known \emph{Gel'fand representation theorem} shows that the abelian von Neumann algebra $V$ is isometrically $*$-isomorphic (i.e., isomorphic as a $C^*$-algebra) to the algebra of continuous, complex-valued functions on its Gel'fand spectrum $\Si_V$.\footnote{The Gel'fand representation theorem holds for abelian $C^*$-algebras. Since every von Neumann algebra is a $C^*$-algebra, the theorem applies to our situation.} The isomorphism is given by
\begin{eqnarray}
			V &\longrightarrow& C(\Si_V)\\
			\A &\longmapsto& \overline{A},
\end{eqnarray}
where $\overline{A}(\ld):=\ld(\A)$ for all $\ld\in\Si_V$. The function $\overline{A}$ is called the \emph{Gel'fand transform} of the operator $\A$. If $\A$ is a self-adjoint operator, then $\overline{A}$ is real-valued, and $\overline{A}(\Si_V)=\spec{\A}$.

The fact that the self-adjoint operators in an abelian (sub)algebra $V$ can be written as real-valued functions on the compact Hausdorff space $\Si_V$ means that the Gel'fand spectrum $\Si_V$ plays a r\^ole exactly like the state space of a classical system. Of course, since $V$ is abelian, not all self-adjoint operators $\A\in\N_{\sa}$ representing physical quantities of the quantum system are contained in $V$. We will interpret the Gel'fand spectrum $\Si_V$ as a \emph{local state space} at $V$. Local here means `at the abelian part $V$ of the global non-abelian algebra $\N$'.

For a projection $\P\in\PV$, we have
\begin{equation}
			\ld(\P)=\ld(\P^{2})=\ld(\P)\ld(\P),
\end{equation}
so $\ld(\P)\in\{0,1\}$ for all projections. This implies that the Gel'fand transform $\overline{P}$ of $\P$ is the characteristic function of some subset of $\Si_V$.\footnote{The fact that a characteristic function can be continuous shows that the Gel'fand topology on $\Si_V$ is pretty wild: the Gel'fand spectrum of an abelian von Neumann algebra is \emph{extremely disconnected}, that is, the closure of each open set is open.} Let $A$ be a physical quantity that is represented by some self-adjoint operator in $V$, and let ``$\Ain{\De}$'' be a proposition about the value of $A$. Then we know from the spectral theorem that there is a projection $\hat E[\Ain{\De}]$ in $V$ that represents the proposition. We saw that $\ld(\hat E[\Ain{\De}])\in\{0,1\}$, and by interpreting $1$ as `true' and $0$ as `false', we see that a valuation $\ld\in\Si_V$ assigns a Boolean truth-value to each proposition ``$\Ain{\De}$''.

The precise relation between projections in $V$ and subsets of the Gel'fand spectrum of $V$ is as follows: let $\P\in\PV$, and define
\begin{equation}
				S_\P:=\{\ld\in\Si_V\mid\ld(\P)=1\}.
\end{equation}
One can show that $S_\P\subseteq\Si_V$ is \emph{clopen}, i.e., closed and open in $\Si_V$. Conversely, every clopen subset $S$ of $\Si_V$ determines a unique projection $\P_S\in\PV$, given as the inverse Gel'fand transform of the characteristic function of $S$. There is a lattice isomorphism
\begin{eqnarray}			\label{alpha}
			\alpha:\PV &\longrightarrow& \mc{C}l(\Si_V)\\			\nonumber
			\P &\longmapsto& S_\P
\end{eqnarray}
between the lattice of projections in $V$ and the lattice of clopen subsets of $\Si_V$. Thus, starting from a proposition ``$\Ain{\De}$'', we obtain a clopen subset $S_{\hat E[\Ain{\De}]}$ of $\Si_V$. This subset consists of all valuations, i.e., pure states $\ld$ of $V$ such that $\ld(\hat E[\Ain\De])=1$, which means that the proposition ``$\Ain\De$'' is true in the state/under the valuation $\ld$. This strengthens the interpretation of $\Si_V$ as a local state space.

We saw that each context $V$ provides us with a local state space, and one of the main ideas in the work by Isham and Butterfield \cite{IB98,IB99,IB00,IB02} was to form a single global object from all these local state spaces.

In order to keep track of the inclusion relations between the abelian subalgebras, we must relate the Gel'fand spectra of larger and smaller subalgebras in a natural way. Let $V,V'\in\VN$ such that $V'\subseteq V$. Then there is a mapping
\begin{eqnarray}			\label{Restriction}
			r:\Si_V &\longrightarrow& \Si_{V'}\\			 \nonumber
			\ld &\longmapsto& \ld|_{V'},
\end{eqnarray}
sending each element $\ld$ of the Gel'fand spectrum of the larger algebra $V$ to its restriction $\ld|_{V'}$ to the smaller algebra $V'$. It is well-known that the mapping $r:\Si_V\rightarrow\Si_{V'}$ is continuous with respect to the Gel'fand topologies and surjective. Physically, this means that every valuation $\ld'$ on the smaller algebra is given as the restriction of a valuation on the larger algebra.

\paragraph{The spectral presheaf.} We can now define the central object in the topos approach to quantum theory: the \emph{spectral presheaf} $\Sig$. A \emph{presheaf} is a contravariant, $\Set$-valued functor, see e.g. \cite{McL98}.\footnote{We will only make minimal use of category theory in this article. In particular, the following definition should be understandable in itself, without further knowledge of functors etc.} Our base category, the domain of this functor, is the context category $\VN$. To each object in $\VN$, i.e., to each context $V$, we assign its Gel'fand spectrum $\Sig_V:=\Si_V$, and to each arrow in $\VN$, i.e., each inclusion $i_{V'V}$, we assign a function from the set $\Sig_V$ to the set $\Sig_{V'}$. This function is $\Sig(i_{V'V}):=r$ (see equation \eq{Restriction}). It implements the concept of coarse-graining on the level of the local state spaces (Gel'fand spectra).

The spectral presheaf $\Sig$ associated with the von Neumann algebra $\N$ of a quantum system is the analogue of the state space of a classical system. The spectral presheaf is not a set, and hence not a space in the usual sense. It rather is a particular, $\Set$-valued functor, built from the Gel'fand spectra of the abelian subalgebras of $\N$, and the canonical restriction functions between them.

We come back to our example, $\N=\mc{B}(\bbC^3)$, and describe its spectral presheaf:

\begin{example}			\label{Ex_SigOfB(C3)}
Let $\N=\mc{B}(\bbC^3)$ as in Ex. \ref{Ex1}, and let $V\in\mc{V}(\bbC^3)$ be a maximal abelian subalgebra. As discussed above, $V$ is of the form $V=\{\P_1,\P_2,\P_3\}''$ for three pairwise orthogonal rank-$1$ projections $\P_1,\P_2,\P_3$. The Gel'fand spectrum $\Si_V$ of $V$ has three elements (and is equipped with the discrete topology, of course). The spectral elements are given as
\begin{equation}
			\ld_i(\P_j)=\delta_{ij}\ \ \ \ (i=1,2,3).
\end{equation}
Let $\A\in V$ be an arbitrary operator. Then $\A=\sum_{i=1}^3 a_i\P_i$ for some (unique) complex coefficients $a_i$. We have $\ld_i(\A)=a_i$. If $\A$ is self-adjoint, then the $a_i$ are real and are the eigenvalues of $\A$.

The non-maximal abelian subalgebra $V_{\P_1}$ has two elements in its Gel'fand spectrum. Let us call them $\ld'_1$ and $\ld'_2$, where $\ld'_1(\P)=1$ and $\ld'_1(\hat 1-\P)=0$, while $\ld'_2(\P)=0$ and $\ld'_2(\hat 1-\P)=1$. The restriction mapping $r:\Si_V\rightarrow\Si_{V'}$ sends $\ld_1$ to $\ld'_1$. Both $\ld_2$ and $\ld_3$ are mapped to $\ld'_2$.

Analogous relations hold for the other non-maximal abelian subalgebras $V_{\P_2}$ and $V_{\P_3}$ of $V$. Since the spectral presheaf is given by assigning to each abelian subalgebra its Gel'fand spectrum, and to each inclusion the corresponding restriction function between the Gel'fand spectra, we have a complete, explicit description of the spectral presheaf $\Sig$ of the von Neumann algebra $\mc{B}(\bbC^3)$.

The generalisation to higher dimensions is straightforward. In particular, if $\rm{dim}(\Hi)=n$ and $V$ is a maximal abelian subalgebra of $\mc{B}(\bbC^n)$, then then Gel'fand spectrum consists of $n$ elements. Topologically, the Gel'fand spectrum $\Si_V$ is a discrete space. (For infinite-dimensional Hilbert spaces, where more complicated von Neumann algebras than $\mc{B}(\bbC^n)$ exist, this is not true in general.)
\end{example}

\section{Representation of propositions in the topos approach -- Daseinisation of projections}	\label{S4}
We now consider the representation of propositions like ``$\Ain{\De}$'' in our topos scheme. We already saw that the spectral presheaf $\Sig$ is an analogue of the state space $\mc{S}$ of a classical system. In analogy to classical physics, where propositions are represented by (Borel) subsets of state space, in the topos approach propositions will be represented by suitable subobjects of the spectral presheaf.

\paragraph{Coarse-graining of propositions.} Consider a proposition ``$\Ain{\De}$'', where $A$ is some physical quantity of the quantum system under consideration. We assume that there is a self-adjoint operator $\A$ in the von Neumann algebra $\N$ of the system that represents $A$. From the spectral theorem, we know that there is a projection operator $\P:=\hat E[\Ain{\De}]$ that represents the proposition ``$\Ain{\De}$''. As is well-known, the mapping from propositions to projections is many-to-one. One can form equivalence classes of propositions to obtain a bijection. In a slight abuse of language, we will usually refer to \emph{the} proposition represented by a projection. Since $\N$ is a von Neumann algebra, the spectral theorem holds. In particular, for all propositions ``$\Ain\De$'', we have $\P=\hat E[\Ain\De]\in\PN$. The projections representing propositions are all contained in the von Neumann algebra. If we had chosen a more general $C^*$-algebra instead, this would not have been the case.

We stipulate that in the topos formulation of quantum theory all classical perspectives/contexts must be taken into account at the same time, so we have to adapt the proposition ``$\Ain{\De}$'' resp. its representing projection $\P$ to all contexts. The idea is very simple: in every context $V$, we pick the strongest proposition implied by ``$\Ain{\De}$'' that can be made from the perspective of this context. On the level of projections, this amounts to taking the \emph{smallest} projection in any context $V$ that is larger than or equal to $\P$.\footnote{Here, we use the interpretation that $\P<\Q$ for two projections $\P,\Q\in\PN$ means that the proposition represented by $\P$ implies the proposition represented by $\Q$. This is customary in quantum logic, and we use it to motivate our construction, though the partial order on projections will \emph{not} be the implication relation for the form of quantum logic resulting from our scheme.} We write, for all $V\in\VN$,
\begin{equation}			\label{Eq_dastooVP}
			\dastoo{V}{P}:=\bigwedge\{\Q\in\PV\mid\Q\geq\P\}
\end{equation}
for this approximation of $\P$ to $V$. If $\P\in\PV$, then clearly the approximation will give $\P$ itself, i.e. $\dastoo{V}{P}=\P$. If $\P\notin\PV$, then $\dastoo{V}{P}>\P$. For many contexts $V$, it holds that $\dastoo{V}{P}=\hat 1$.

Since the projection $\dastoo{V}{P}$ lies in $V$, it corresponds to a proposition about some physical quantity described by a self-adjoint operator \emph{in} $V$. If $\dastoo{V}{P}$ happens to be a spectral projection of the operator $\A$, then the proposition corresponding to $\dastoo{V}{P}$ is of the form ``$\Ain\Ga$'' for some Borel set $\Ga$ of real numbers that is larger than the set $\De$ in the original proposition ``$\Ain\De$''.\footnote{We remark that in order to have $\dastoo{V}{P}\in\PV$, it is sufficient, but \emph{not} necessary that $\A\in V_\sa$.} The fact that $\Ga\supset\De$ shows that the mapping
\begin{eqnarray}
			\delta^o_V:\PN &\longrightarrow& \PV\\			\nonumber
			\P &\longmapsto& \dastoo{V}{P}
\end{eqnarray}
implements the idea of coarse-graining on the level of projections (resp. the corresponding propositions).

If $\dastoo{V}{P}$ is not a spectral projection of $\A$, it still holds that it corresponds to a proposition ``$B\varepsilon\Ga$'' about the value of some physical quantity $B$ represented by a self-adjoint operator $\hat B$ in $V$, and since $\P<\dastoo{V}{P}$, this proposition can still be seen as a coarse-graining of the original proposition ``$\Ain\De$'' corresponding to $\P$ (even though $A\neq B$).

Every projection is contained in some abelian subalgebra, so there is always some $V$ such that $\dastoo{V}{P}=\P$. From the perspective of this context, no coarse-graining takes place.

We call the original proposition ``$\Ain\De$'' that we want to represent the \emph{global} proposition, while a proposition ``$B\varepsilon\Ga$'' corresponding to the projection $\dastoo{V}{P}$ is called a \emph{local} proposition. `Local' here again means `at the abelian part $V$' and has no spatio-temporal connotation.

We can collect all the local approximations into a mapping
\begin{equation}
			\P\mapsto(\dastoo{V}{P})_{V\in\VN}
\end{equation}
whose component at $V$ is given by \eq{Eq_dastooVP}. From every global proposition we thus obtain one coarse-grained local proposition for each context $V$. In the next step, we will consider the whole collection of coarse-grained local propositions, relate it to a suitable subobject of the spectral presheaf, and regard this object as the representative of the global proposition.

\paragraph{Daseinisation of projections.} Let $V$ be a context. Since $V$ is an abelian von Neumann algebra, the lattice $\PV$ of projections in $V$ is a distributive lattice. Moreover, $\PV$ is complete and orthocomplemented.  We saw in \eq{alpha} that there is a lattice isomorphism between $\PV$ and $\mc{C}l(\Sig_V)$, the lattice of clopen subsets of $\Sig_V$.

We already have constructed a family $(\dastoo{V}{P})_{V\in\VN}$ of projections from a single projection $\P\in\PN$ representing a global proposition. For each context $V$, we now consider the clopen subset $\alpha(\dastoo{V}{P})=S_{\dastoo{V}{P}}\subseteq\Si_V$. We thus obtain a family $(S_{\dastoo{V}{P}})_{V\in\VN}$ of clopen subsets, one for each context. One can show that whenever $V'\subset V$, then
\begin{equation}
			S_{\dastoo{V}{P}}|_{V'}=\{\ld|_{V'} \mid \ld\in S_{\dastoo{V}{P}}\}=S_{\dastoo{V'}{P}},
\end{equation}
that is, the clopen subsets in the family $(S_{\dastoo{V}{P}})_{V\in\VN}$ `fit together' under the restriction mappings of the spectral presheaf $\Sig$ (see Thm. 3.1 in \cite{DI08}). This means that the family $(S_{\dastoo{V}{P}})_{V\in\VN}$ of clopen subsets, together with the restriction mappings between them, forms a \emph{subobject} of the spectral presheaf. This particular subobject will be denoted as $\ps{\das{P}}$ and is called the \emph{daseinisation of $\P$}.

A subobject of the spectral presheaf (or any other presheaf) is the analogue of a subset of a space. Concretely, for each $V\in\VN$, the set $S_{\dastoo{V}{P}}$ is a subset of the Gel'fand spectrum of $V$. Moreover, the subsets for different $V$ are not arbitrary, but fit together under the restriction mappings of the spectral presheaf.

The collection of all subobjects of the spectral presheaf is a complete \emph{Heyting algebra} and is denoted as $\Sub{\Sig}$. It is the analogue of the power set of the state space of a classical system.

Let $\ps S$ be a subobject of the spectral presheaf $\Sig$ such that the component $S_V\subseteq\Sig_V$ is a clopen subset for all $V\in\VN$. We call such a subobject a \emph{clopen} subobject. All subobjects obtained from daseinisation of projections are clopen. One can show that the clopen subobjects form a complete Heyting algebra $\Subcl{\Sig}$ (see Thm. 2.5 in \cite{DI08}). This complete Heyting algebra can be seen as the analogue of the $\sigma$-complete Boolean algebra of Borel subsets of the state space of a classical system. In our constructions, the Heyting algebra $\Subcl{\Sig}$ will be more important than the bigger Heyting algebra $\Sub{\Sig}$, just as in classical physics, where the Boolean algebra of measurable subsets of state space is technically more important than the algebra of all subsets.

Starting from a global proposition ``$\Ain\De$'' with corresponding projection $\P$, we have constructed the clopen subobject $\ps{\das{P}}$ of the spectral presheaf. This subobject, the analogue of the measurable subset $f_A^{-1}(\De)$ of the state space of a classical system, is the representative of the global proposition ``$\Ain\De$''. The following mapping is called \emph{daseinisation of projections}:
\begin{eqnarray}
		\ps\delta:\PN &\longrightarrow& \Subcl{\Sig}\\			\nonumber
		\P &\longmapsto& \ps{\das{P}}.
\end{eqnarray}
It has the following properties:
\begin{itemize}
	\item[(1)] If $\P<\Q$, then $\ps{\delta(\P)}<\ps{\delta(\Q)}$, i.e., daseinisation is order-preserving;
	\item[(2)] the mapping $\ps{\delta}:\PN\rightarrow\Subcl{\Sig}$ is injective, that is, two (inequivalent) propositions correspond to two different subobjects;
	\item[(3)] $\ps{\delta(\hat 0)}=\ps 0$, the empty subobject, and $\ps{\delta(\hat 1)}=\Sig$. The trivially false proposition is represented by the empty subobject, the trivially true proposition is represented by the whole of $\Sig$.
	\item[(4)] for all $\P,\Q\in\PN$, it holds that $\ps{\delta(\P\vee\Q)}=\ps{\delta(\P)}\vee\ps{\delta(\Q)}$, that is, daseinisation preserves the disjunction (Or) of propositions;
	\item[(5)] for all $\P,\Q\in\PN$, it holds that $\ps{\delta(\P\wedge\Q)}\leq\ps{\delta(\P)}\wedge\ps{\delta(\Q)}$, that is, daseinisation does not preserve the conjunction (And) of propositions;
	\item[(6)] in general, $\ps{\delta(\P)}\wedge\ps{\delta(\Q)}$ is not of the form $\ps{\delta(\hat R)}$ for a projection $\hat R\in\PN$, and daseinisation is not surjective.
\end{itemize}
The domain of the mapping $\ps\delta$, the lattice $\PN$ of projections in the von Neumann algebra $\N$ of physical quantities, is the lattice that Birkhoff and von Neumann suggested as the algebraic representative of propositional quantum logic (\cite{BvN36}; in fact, they considered the case $\N=\BH$.) Thus daseinisation of projections can be seen as a `translation' mapping between ordinary, Birkhoff-von Neumann quantum logic, which is based upon the non-distributive lattice of projections $\PN$, and the topos form of propositional quantum logic, which is based upon the distributive lattice $\Subcl{\Sig}$. The latter more precisely is a Heyting algebra.

The quantum logic aspects of the topos formalism and the physical and conceptual interpretation of the relations above are discussed in some detail in \cite{Doe09}, building upon \cite{IB98,DI(2),DI08}. The resulting new form of quantum logic has many attractive features and avoids a number of the well-known conceptual problems of standard quantum logic \cite{DCG02}.

\begin{example}
Let $\Hi=\bbC^3$, and let $S_z$ be the physical quantity `spin in $z$-direction'. We consider the proposition ``$S_z\varepsilon(-0.1,0.1)$'', i.e., ``the spin in $z$-direction has a value between $-0.1$ and $0.1$''. (Since the eigenvalues are $-\frac{1}{\sqrt 2},0$ and $\frac{1}{\sqrt 2}$, this amounts to saying that the spin in $z$-direction is $0$.) The self-adjoint operator representing $S_z$ is
\begin{equation}
			\hat S_z=\frac{1}{\sqrt{2}}
			\left(\begin{array}
						[c]{ccc}%
						1 & 0 & 0\\
						0 & 0 & 0\\
						0 & 0 & -1
			\end{array}\right),
\end{equation}
and the eigenvector corresponding to the eigenvalue $0$ is $\ket\psi=(0,1,0)$. The projection $\P:=\hat E[S_z\varepsilon(-0.1,0.1)]=\ket\psi\bra\psi$ onto this eigenvector hence is
\begin{equation}
			\P=
			\left(\begin{array}
						[c]{ccc}%
						0 & 0 & 0\\
						0 & 1 & 0\\
						0 & 0 & 0
			\end{array}\right).
\end{equation}
There is exactly one context $V\in\mc{V}(\bbC^3)$ that contains the operator $\hat S_z$, namely the (maximal) context $V_{\hat S_z}=\{\P_1,\P_2,\P_3\}''$ generated by the projections 
\begin{equation}
			\P_1=
			\left(\begin{array}
						[c]{ccc}%
						1 & 0 & 0\\
						0 & 0 & 0\\
						0 & 0 & 0
			\end{array}\right),\ \ \
			\P_2=\P=
			\left(\begin{array}
						[c]{ccc}%
						0 & 0 & 0\\
						0 & 1 & 0\\
						0 & 0 & 0
			\end{array}\right),\ \ \
			\P_3=
			\left(\begin{array}
						[c]{ccc}%
						0 & 0 & 0\\
						0 & 0 & 0\\
						0 & 0 & 1
			\end{array}\right).			
\end{equation}
In the following, we consider different kinds of contexts $V$ and the approximations $\dasout{\P}{V}$ of $\P=\P_2$ to these contexts.

\textbf{(1) $V_{\hat S_z}$ and its subcontexts:} Of course, $\dasout{\P}{V_{\hat S_z}}=\P$. Let $V_{\P_1}=\{\P_1,\P+\P_3\}''$, then $\dasout{\P}{V_{\P_1}}=\P+\P_3$. Analogously, $\dasout{\P}{V_{\P_3}}=\P+\P_3$, but $\dasout{\P}{V_{\P}}=\P$, since $V_\P$ contains the projection $\P$.

\textbf{(2) Maximal contexts which share the projection $\P$ with $V_{\hat S_z}$, and their subcontexts:} Let $\Q_1,\Q_3$ be two orthogonal rank-$1$ projections that are both orthogonal to $\P$ (e.g. projections obtained from $\P_1,\P_3$ by a rotation about the axis determined by $\P$). Then $V=\{\Q_1,\P,\Q_3\}''$ is a maximal algebra that contains $\P$, so $\dasout{\P}{V}=\P$. Clearly, there are uncountably many contexts $V$ of this form.

For projections $\Q_1,\P,\Q_3$ as described above, we have $\dasout{\P}{V_{\Q_1}}=\P+\Q_3$, $\dasout{\P}{V_{\Q_3}}=\Q_1+\P$ (and $\dasout{\P}{V_\P}=\P$, as mentioned before).

\textbf{(3) Contexts which contain a rank-$2$ projection $\Q$ such that $\Q>\P$:} We first note that there are uncountably many projections $\Q$ of this form.
\begin{enumerate}
	\item [(a)] The trivial cases of contexts containing $\Q>\P$ are (1) maximal contexts containing $\P$; and (2) the context $V_{\Q}=V_{\hat 1-\Q}$.
	\item [(b)] There are also maximal contexts which do not contain $\P$, but do contain $\Q$: let $\Q_1,\Q_2$ be two orthogonal rank-$1$ projections such that $\Q_1+\Q_2=\Q$, where $\Q_1,\Q_2\neq\P$. Let $\Q_3:=\hat 1-(\Q_1+\Q_2)$. For any given $\Q$, there are uncountably many contexts of the form $V=\{\Q_1,\Q_2,\Q_3\}$ (since one can rotate around the axis determined by $\Q_3$).
\end{enumerate}
For all these contexts, $\dasout{\P}{V}=\Q$, of course.

\textbf{(4) Other contexts:} All other contexts neither contain $\P$ nor a projection $\Q>\P$ that is not the identity $\hat 1$. This means that the smallest projection larger than $\P$ contained in such a context is the identity, so $\dasout{\P}{V}=1$.

We have constructed $\dasout{\P}{V}$ for all contexts. Recalling that for each context $V$, there is a lattice isomorphism $\alpha$ from the projections to the clopen subsets of the Gel'fand spectrum $\Si_V$ (see \eq{alpha}), we can easily write down the clopen subsets $S_{\dasout{\P}{V}}\;(V\in\mc V (\bbC^3))$ corresponding to the projections $\dasout{\P}{V}\;(V\in\mc V (\bbC^3))$. The daseinisation $\ps{\das{P}}$ of the projection $\P$ is nothing but the collection $(S_{\dasout{\P}{V}})_{V\in\mc V (\bbC^3)}$ (together with the canonical restriction mappings described in \eq{Restriction} and Example \ref{Ex_SigOfB(C3)}).
\end{example}

\section{Representation of physical quantities -- Daseinisation of self-adjoint operators}	\label{S5}	\label{Sec_RepOfPhysQuants}
\subsection{Physical quantities as arrows}			\label{SubS_PhysQuantitiesAsArrows}
In classical physics, a physical quantity $A$ is represented by a function $f_A$ from the state space $\mc{S}$ of the system to the real numbers. Both the state space and the real numbers are sets (with extra structure), so they are objects in the topos $\Set$ of sets and functions.\footnote{We will not give the technical definition of a topos here. The idea is that a topos is a category that is structurally similar to $\Set$, the category of small sets and functions. Of course, $\Set$ itself is a topos.} The function $f_A$ representing a physical quantity is an arrow in the topos $\Set$.

The topos reformulation of quantum physics uses structures in the topos $\SetVNop$, the topos of presheaves over the context category $\VN$. The objects in the topos, called presheaves, are the analogues of sets, and the arrows between them, called natural transformations, are the analogues of functions. We will now represent a physical quantity $A$ by a suitable arrow, denoted $\dasB{\A}$, from the spectral presheaf $\Sig$ to some other presheaf related to the real numbers. Indeed, there is a real-number object $\ps{\bbR}$ in $\SetVNop$ that is very much like the familiar real numbers: for every $V$, the component $\ps{\bbR}_V$ simply is the ordinary real numbers, and all restriction functions are just the identity. But the Kochen-Specker theorem tells us that it is impossible to assign (sharp) real values to all physical quantities, which suggests that the presheaf $\ps{\bbR}$ is \emph{not} the right codomain for our arrows $\dasB{\A}$ representing physical quantities.

Instead, we will be using a presheaf, denoted as $\Rlr$, that takes into account that (a) one cannot assign sharp values to all physical quantities in any given state in quantum theory; and (b) we have coarse-graining. We first explain how coarse-graining will show up: if we want to assign a `value' to a physical quantity $A$, then we again have to find some global expression, involving all contexts $V\in\VN$. Let $\A\in\N_\sa$ be the self-adjoint operator representing $A$. For those contexts $V$ that contain $\A$, there is no problem, but if $\A\notin V$, we will have to approximate $\A$ by a self-adjoint operator \emph{in} $V$. Actually, we will use two approximations: one self-adjoint operator in $V$ approximating $\A$ from below (in a suitable order, specified below), and one operator approximating $\A$ from above. Daseinisation of self-adjoint operators is nothing but the technical device achieving this approximation.

We will show that the presheaf $\Rlr$ basically consists of real intervals, to be interpreted as `unsharp values'. Upon coarse-graining, the intervals can only get bigger. Assume we have given a physical quantity $A$, some state of the system and two contexts $V,V'$ such that $V'\subset V$. We use the state to assign some interval $[a,b]$ as an `unsharp value' to $A$ at $V$. Then, at $V'$, we will have a bigger interval $[c,d]\supseteq[a,b]$, being an even more unsharp value of $A$. It it important to note, though, that every self-adjoint operator is contained in some context $V$. If the state is an eigenstate of $\A$ with eigenvalue $a$, then at $V$, we will assign the `interval' $[a,a]$ to $A$. In this sense, the eigenvector-eigenvalue link is preserved, regardless of the unsharpness in values introduced by coarse-graining. As always in the topos scheme, it is of central importance that the interpretationally relevant structures -- here, the `values' of physical quantities -- are global in nature. One has to consider all contexts at once, and not just argue locally at some context $V$, which will necessarily only give partial information.

\paragraph{Approximation in the spectral order.} Let $\A\in\N_\sa$ be a self-adjoint operator, and let $V\in\VN$ be a context. We assume that $\A\notin V_\sa$, so we have to approximate $\A$ in $V$. The idea is very simple: we take a pair of self-adjoint operators, consisting of the largest operator in $V_\sa$ smaller than $\A$ and the smallest operator in $V_\sa$ larger than $\A$. The only non-trivial point is the question which order on self-adjoint operators to use.

The most commonly used order is the \emph{linear order}: $\A\leq\hat B$ if and only if $\bra\psi\A\ket\psi\leq\bra\psi\hat B\ket\psi$ for all vector states $\bra\psi\_\ket\psi:\N\rightarrow\bbC$. Yet, we will approximate in another order on the self-adjoint operators, the so-called \emph{spectral order} \cite{Ols71,deG04}. This has the advantage that the spectra of the approximated operators are subsets of the spectrum of the original operator, which is not the case for the linear order in general.

The spectral order is defined in the following way: let $\A,\hat B\in\N_\sa$ be two self-adjoint operators in a von Neumann algebra $\N$, and let $(\hat E^\A_r)_{r\in\bbR},(\hat E^{\hat B}_r)_{r\in\bbR}$ be their spectral families (see e.g. \cite{KR83a}). Then
\begin{equation}			\label{Def_SpecOrder}
			\A\leq_s\hat B\text{\ \ \ if and only if\ \ \ }\forall r\in\bbR:\hat E^\A_r\geq\hat E^{\hat B}_r.
\end{equation}
Equipped with the spectral order, $\N_\sa$ becomes a boundedly complete lattice. The spectral order is coarser than the linear order, i.e., $\A\leq_s\hat B$ implies $\A\leq\hat B$. The two orders coincide on projections and for commuting operators.

Consider the set $\{\hat B\in V_\sa \mid \hat B\leq_s\A\}$, i.e., those self-adjoint operators in $V$ that are spectrally smaller than $\A$. Since $V$ is a von Neumann algebra, its self-adjoint operators form a boundedly complete lattice under the spectral order, so the above set has a well-defined maximum with respect to the spectral order. It is denoted as
\begin{equation}		\label{dastoiVA}
			\dastoi{V}{A}:=\bigvee\{\hat B\in V_\sa \mid \hat B\leq_s\A\}.
\end{equation}
Similarly, we define
\begin{equation}		\label{dastooVA}
			\dastoo{V}{A}:=\bigwedge\{\hat B\in V_\sa \mid \hat B\geq_s\A\},
\end{equation}
which is the smallest self-adjoint operator in $V$ that is spectrally larger than $\A$.

Using the definition \eq{Def_SpecOrder} of the spectral order, it is easy to describe the spectral families of the operators $\dastoi{V}{A}$ and $\dastoo{V}{A}$:
\begin{equation}			\label{Eq_SpecFamOutDas}
			 \forall r\in\bbR:\hat E^{\dastoo{V}{A}}_r=\dasinn{\hat E^\A_r}{V}
\end{equation}
and
\begin{equation}
			\forall r\in\bbR:\hat E^{\dastoi{V}{A}}_r=\dasout{\hat E^\A_r}{V}.
\end{equation}
It turns out (see \cite{deG07,DI(2)}) that the spectral family defined by the former definition is right-continuous. The latter expression must be amended slightly if we want a right-continuous spectral family: we simply enforce right continuity by setting
\begin{equation}			\label{Eq_SpecFamInnDas}
			\forall r\in\bbR:\hat E^{\dastoi{V}{A}}_r=\bigwedge_{s>r}\dasout{\hat E^\A_s}{V}.
\end{equation}
These equations show the close mathematical link between the approximations in the spectral order of projections and self-adjoint operators.

The self-adjoint operators $\dastoi{V}{A},\dastoo{V}{A}$ are the best approximations to $\A$ in $V_\sa$ with respect to the spectral order. $\dastoi{V}{A}$ approximates $\A$ from below, $\dastoo{V}{A}$ from above. Since $V$ is an abelian $C^*$-algebra, we can consider the Gel'fand transforms of these operators to obtain a pair
\begin{equation}
			(\fu{\dastoi{V}{A}},\fu{\dastoo{V}{A}})
\end{equation}
of continuous functions from the Gel'fand spectrum $\Si_V$ of $V$ to the real numbers. One can show that \cite{deG05b,DI(3)}
\begin{equation}
			\spec{\dastoi{V}{A}}\subseteq\spec{\A},\ \ \ \ \spec{\dastoo{V}{A}}\subseteq\spec{\A},
\end{equation}
so the Gel'fand transforms actually map into the spectrum $\spec{\A}$ of the original operator $\A$. Let $\ld\in\Si_V$ be a pure state of the algebra $V$.\footnote{We remark that in general, this cannot be identified with a state of the non-abelian algebra $\N$, so we still have a local argument here.} Then we can evaluate $(\fu{\dastoi{V}{A}}(\ld),\fu{\dastoo{V}{A}}(\ld))$ to obtain a pair of real numbers in the spectrum of $\A$. The first number is smaller than the second, since the first operator $\dastoi{V}{A}$ approximates $\A$ from below in the spectral order (and $\dastoi{V}{A}\leq_s\A$ implies $\dastoi{V}{A}\leq\A$), while the second operator $\dastoo{V}{A}$ approximates $\A$ from above. The pair $(\fu{\dastoi{V}{A}}(\ld),\fu{\dastoo{V}{A}}(\ld))$ of real numbers is identified with the interval $[\fu{\dastoi{V}{A}}(\ld),\fu{\dastoo{V}{A}}(\ld)]$ and interpreted as the component at $V$ of the unsharp value of $\A$ in the (local) state $\ld$.

\paragraph{The presheaf of `values'.} We now see in which way intervals show up as `values'. If we want to define a presheaf encoding this, we encounter a certain difficulty. It is no problem to assign to each context $V$ the collection $\mathbb{IR}_V$ of real intervals, but if $V'\subset V$, then we need a restriction mapping from $\mathbb{IR}_V$ to $\mathbb{IR}_{V'}$. Since both sets are the same, the naive guess is to take an interval in $\mathbb{IR}_V$ and to map it to the same interval in $\mathbb{IR}_{V'}$. But we already know that the `values' of our physical quantities come from approximations of a self-adjoint operator $\A$ to all the contexts. If $V'\subset V$, then in general we have $\dastoi{V'}{A}<_s\dastoi{V'}{A}$,\footnote{This, of course, is nothing but coarse-graining on the level of self-adjoint operators.} which implies $\fu{\dastoi{V'}{A}}(\ld|_{V'})\leq\fu{\dastoi{V}{A}}(\ld)$ for $\ld\in\Si_V$. Similarly, $\fu{\dastoo{V'}{A}}(\ld|_{V'})\geq\fu{\dastoo{V}{A}}(\ld)$, so the intervals that we obtain get bigger when we go from larger contexts to smaller ones, due to coarse-graining. Our restriction mapping from the intervals at $V$ to the intervals at $V'$ must take this into account.

The intervals at different contexts clearly also depend on the state $\ld\in\Si_V$ that one picks. Moreover, we want a presheaf that is not tied to the operator $\A$, but can provide `values' for all physical quantities and their corresponding self-adjoint operators.

The idea is to define a presheaf such that at each context $V$, we have all intervals (including those of the form $[a,a]$), plus all possible restrictions to the same or larger intervals at smaller contexts $V'\subset V$. While this sounds daunting, there is a very simple way to encode this: let $\downarrow\!\!V:=\{V'\in\VN \mid V'\subseteq V\}$. This is a partially ordered set. We now consider a function $\mu:\downarrow\!\!V\rightarrow\bbR$ that \emph{preserves the order}, that is, $V'\subset V$ implies $\mu(V')\leq\mu(V)$. Analogously, let $\nu:\downarrow\!\!V\rightarrow\bbR$ denote an \emph{order-reversing} function, that is, $V'\subset V$ implies $\mu(V')\geq\mu(V)$. Additionally, we assume that for all $V'\subseteq V$, it holds that $\mu(V')\leq\nu(V')$. The pair $(\mu,\nu)$ thus gives one interval $(\mu(V'),\nu(V'))$ for each context $V'\subseteq V$, and `going down the line' from $V$ to smaller subalgebras $V',V'',...$, these intervals can only get bigger (or stay the same). Of course, each pair $(\mu,\nu)$ gives a specific sequence of intervals (for each given $\ld$). In order to have all possible intervals and sequences built from them, we simply consider the collection of all pairs of order-preserving and -reversing functions.

This is formalised in the following way: let $\Rlr$ be the presheaf given
\begin{itemize}
	\item [(a)] on objects: for all $V\in\VN$, let
	\begin{equation}
				\Rlr_V:=\{(\mu,\nu) \mid \mu,\nu:\downarrow\!\!V\rightarrow\bbR,\mu\text{ order-preserving, }\nu\text{ order-reversing},\ \mu\leq\nu\};
	\end{equation}
	\item [(b)] on arrows: for all inclusions $i_{V'V}:V'\rightarrow V$, let
	\begin{eqnarray}
				\Rlr(i_{V'V}):\Rlr_V &\longrightarrow& \Rlr_{V'}\\			\nonumber
				(\mu,\nu) &\longmapsto& (\mu|_{V'},\nu|_{V'}).
	\end{eqnarray}
\end{itemize}
The presheaf $\Rlr$ is where physical quantities take their `values' in the topos version of quantum theory. It is the analogue of the set of real numbers, where physical quantities in classical physics take their values.

\paragraph{Daseinisation of a self-adjoint operator.} We now start to assemble the approximations to the self-adjoint operator $\A$ into an arrow from the spectral presheaf $\Sig$ to the presheaf $\Rlr$ of values. Such an arrow is a so-called natural transformation (see e.g. \cite{McL98}), since these are the arrows in the topos $\SetVNop$. For each context $V\in\VN$, we define a function, which we will denote as $\dasB{\A}_V$, from the Gel'fand spectrum $\Sig_V$ of $V$ to the collection $\Rlr_V$ of pairs of order-preserving and -reversing functions from $\downarrow\!\!V$ to $\bbR$. In a second step, we will see that these functions fit together in the appropriate sense to form a natural transformation.

The function $\dasB{\A}_V$ is constructed as follows: let $\ld\in\Sig_V$. In each $V'\in\downarrow\!\!V$, we have one approximation $\delta^i(\A)_{V'}$ to $\A$. If $V''\subset V'$, then $\delta^i(\A)_{V''}\leq_s\delta^i(\A)_{V'}$, which implies $\delta^i(\A)_{V''}\leq\delta^i(\A)_{V'}$. One can evaluate $\ld$ at $\delta^i(\A)_{V'}$ for each $V'$, giving an order-preserving function
\begin{align}			\label{mu_ld}
				\mu_\ld:\downarrow\!\!V &\longrightarrow \bbR\\			 \nonumber
				V' &\longmapsto \ld|_{V'}(\delta^i(\A)_{V'})=\ld(\delta^i(\A)_{V'}).
\end{align}
Similarly, using the approximations $\delta^o(\A)_{V'}$ to $\A$ from above, we obtain an order-reversing function
\begin{align}			\label{nu_ld}
				\nu_\ld:\downarrow\!\!V &\longrightarrow \bbR\\			\nonumber
				V' &\longmapsto \ld|_{V'}(\delta^o(\A)_{V'})=\ld(\delta^o(\A)_{V'}).
\end{align}
This allows us to define the desired function:
\begin{align}			\label{dasB(A)_V}
			\dasB{\A}_V:\Sig_V &\longrightarrow \Rlr_V\\			\nonumber
			\ld &\longmapsto (\mu_\ld,\nu_\ld).
\end{align}

Obviously, we obtain one function $\dasB{\A}_V$ for each context $V\in\VN$. The condition for these functions to form a natural transformation is the following: whenever $V'\subset V$, one must have $\dasB{\A}_V(\ld)|_{V'}=\dasB{\A}_{V'}(\ld|_{V'})$. In fact, this equality follows trivially from the definition. We thus have arrived at an arrow
\begin{equation}
			\dasB{\A}:\Sig\longrightarrow\Rlr
\end{equation}
representing a physical quantity $A$. The arrow $\dasB{\A}$ is called the \emph{daseinisation of $\A$}. This arrow is the analogue of the function $f_A:\mc{S}\rightarrow\bbR$ from the state space to the real numbers representing the physical quantity $A$ in classical physics.

\paragraph{Pure states and assignment of `values' to physical quantities.} Up to now, we have always argued with elements $\ld\in\Sig_V$. This is a local argument, and we clearly need a global representation of states. Let $\psi$ be a unit vector in Hilbert space, identified with the pure state $\bra\psi\_\ket\psi:\N\rightarrow\bbC$ as usual. In the topos approach, such a pure state is represented by the subobject $\ps{\delta(\P_\psi)}$ of the spectral presheaf $\Sig$, i.e., the daseinisation of the projection $\P_\psi$ onto the line $\bbC\psi$ in Hilbert space. Details can be found in \cite{DI08}.

The subobject $\ps\wpsi:=\ps{\delta(\P_\psi)}$ is called the \emph{pseudo-state} associated to $\psi$. It is the analogue of a point $s\in\mc{S}$ of the state space of a classical system. Importantly, the pseudo-state is not a global element of the presheaf $\Sig$. Such a global element would be the closest analogue of a point in a set, but as Isham and Butterfield observed \cite{IB98}, the spectral presheaf $\Sig$ has no global elements at all -- a fact that is equivalent to the Kochen-Specker theorem. Nonetheless, subobjects of $\Sig$ of the form $\ps\wpsi=\ps{\delta(\P_\psi)}$ are minimal subobjects in an appropriate sense and as such are as close to points as possible (see \cite{DI08,Doe08}).

The `value' of a physical quantity $A$, represented by an arrow $\dasB{\A}$, in a state described by $\ps\wpsi$ then is
\begin{equation}
            \dasB{\A}(\ps\wpsi).
\end{equation}
This of course is the analogue of the expression $f_A(s)$ in classical physics, where $f_A$ is the real-valued function on the state space $\mc{S}$ of the system representing the physical quantity $A$, and $s\in\mc{S}$ is the state of the system. 

The expression $\dasB{\A}(\ps\wpsi)$ turns out to describe a considerably more complicated object than the classical expression $f_A(s)$. One can easily show that $\dasB{\A}(\ps\wpsi)$ is a subobject of the presheaf $\Rlr$ of `values', see section 8.5 in \cite{DI08}. Concretely, it is given at each context $V\in\VN$ by
\begin{equation}
            (\dasB{\A}(\ps\wpsi))_V=\dasB{\A}_V(\ps\wpsi_V)=\{\dasB{\A}_V(\ld) \mid \ld\in\ps\wpsi_V\}.
\end{equation}
As we saw above, each $\dasB{\A}_V(\ld)$ is a pair of functions $(\mu_\ld,\nu_\ld)$ from the set $\downarrow\!\!V$ to the reals, and more specifically to the spectrum $\spec{\A}$ of the operator $\A$. The function $\mu_\ld$ is order-preserving, so it takes smaller and smaller values as one goes down from $V$ to smaller subalgebras $V',V'',...$, while $\nu_\ld$ is order-reversing and takes larger and larger values. Together, they determine one real interval $[\mu_\ld(V'),\nu_\ld(V')]$ for every context $V'\in\downarrow\!\!V$. Every $\ld\in\ps\wpsi_V$ determines one such sequence of nested intervals. The `value' of the physical quantity $A$ in the given state is the collection of all these sequences of nested intervals for all the different contexts $V\in\VN$.

We conjecture that the usual expectation value $\bra\psi\A\ket\psi$ of the physical quantity $A$, which of course is a single real number, can be calculated from the `value' $\dasB{\A}(\ps\wpsi)$. It is easy to see that each real interval showing up in (the local expressions of) $\dasB{\A}(\ps\wpsi)$ contains the expectation value $\bra\psi\A\ket\psi$.

In \cite{Doe08}, it was shown that arbitrary quantum states (not just pure ones) can be represented by probability measures on the spectral presheaf. It was also shown how the expectation values of physical quantities can be calculated using these measures.

Instead of closing this section by showing how the relevant structures look like concretely for our standard example $\N=\mc{B}(\bbC^3)$, we will introduce an efficient way of calculating the approximations $\dastoi{V}{A}$ and $\dastoo{V}{A}$ first, which works uniformly for all contexts $V\in\VN$. This will simplify the calculation of the arrow $\dasB{\A}$ significantly.

\subsection{Antonymous and observable functions}
Our aim is to define two functions $g_\A,f_\A$ from which we can calculate the approximations $\dastoi{V}{A}$ resp. $\dastoo{V}{A}$ to $\A$ for all contexts $V\in\VN$. More precisely, we will describe the Gel'fand transforms of the operators $\dastoi{V}{A}$ and $\dastoo{V}{A}$ (for all $V$). These are functions from the Gel'fand spectrum $\Si_V$ to the reals. The main idea, due to de Groote, is to use the fact that each $\ld\in\Si_V$ corresponds to a maximal \emph{filter} in the projection lattice $\PV$ of $V$. Hence, we will consider the Gel'fand transforms as functions on filters, and by `extending' each filter in $\PV$ to a filter in $\PN$ (non-maximal in general), we will arrive at functions on filters in $\PN$, the projection lattice of the non-abelian von Neumann algebra $\N$. First, we collect a few basic definitions and facts about filters:

\paragraph{Filters.} Let $\mathbb{L}$ be a lattice with zero element $0$. A subset $F$ of elements of $\mathbb{L}$ is a \emph{(proper) filter} (or \emph{(proper) dual ideal}) if (i) $0\notin F$, (ii) $a,b\in F$ implies $a\wedge b\in F$ and (iii) $a\in F$ and $b\geq a$ imply $b\in F$. When we speak of filters, we always mean proper filters. We write $\mc{F}(\mathbb{L})$ for the set of filters in a lattice $\mathbb{L}$. If $\N$ is a von Neumann algebra, then we write $\mc{F}(\N)$ for $\mc{F}(\PN)$.

A subset $B$ of elements of $\mathbb{L}$ is a \emph{filter base} if (i) $0\neq B$ and (ii) for all $a,b\in B$, there exists a $c\in B$ such that $c\leq a\wedge b$.

Maximal filters and maximal filter bases in a lattice $\mathbb{L}$ with $0$ exist by Zorn's lemma, and each maximal filter is a maximal filter base and vice versa. The maximal filters in the projection lattice of a von Neumann algebra $\N$ are denoted by $\mc{Q}(\N)$. Following de Groote \cite{deG05c}, one can equip this set with a topology and consider its relations to well-known constructions (in particular, Gel'fand spectra of abelian von  Neumann algebras). Here, we will not consider the topological aspects.

If $V$ is an abelian subalgebra of a von Neumann algebra $\N$, then the projections in $V$ form a distributive lattice $\PV$, that is, for all $\P,\Q,\hat{R}\in\PV$ we have
\begin{eqnarray}
			\P\wedge(\Q\vee\hat{R})  &=& (\P\wedge\Q)\vee(\P\wedge\hat{R}),\\
			\P\vee(\Q\wedge\hat{R})  &=& (\P\vee\hat{Q})\wedge(\P\vee\hat{R}).
\end{eqnarray}
The projection lattice of a von Neumann algebra is distributive if and only if the algebra is abelian.

A maximal filter in a complemented, distributive lattice $\mathbb{L}$ is called an \emph{ultrafilter}, hence $\mc{Q}(\mathbb{L})$ is the space of ultrafilters in $\mathbb{L}$. We will not use the notion `ultrafilter' for maximal filters in \emph{non}-distributive lattices like the projection lattices of non-abelian von Neumann algebras.

The characterising property of an ultrafilter $U$ in a distributive, complemented lattice $\mathbb{L}$ is that for each element $a\in\mathbb{L}$, either $a\in U$ or $a^{c}\in U$, where $a^{c}$ denotes the complement of $a$. This can easily be seen: a complemented lattice has a maximal element $1$, and we have $a\vee a^{c}=1$ by definition. Let us assume that $U$ is an ultrafilter, but $a\notin U$ and $a^{c}\notin U$. In particular, $a\notin U$ means that there is some $b\in U$ such that $b\wedge a=0$. Using distributivity of the lattice $\mathbb{L}$, we get
\begin{equation}
			b=b\wedge(a\vee a^{c})=(b\wedge a)\vee(b\wedge a^{c})=b\wedge a^{c},
\end{equation}
so $b\leq a^{c}$. Since $b\in U$ and $U$ is a filter, this implies $a^{c}\in U$, contradicting our assumption. If $\mathbb{L}$ is the projection lattice $\PV$ of an abelian von Neumann algebra $V$, then the maximal element is the identity operator $\hat{1}$ and the complement of a projection is given as $\P^{c}:=\hat{1}-\P$. Each ultrafilter $q\in\mc{Q}(V)$ hence contains either $\P$ or $\hat{1}-\P$ for all $\P\in\PV$.

\paragraph{Elements of the Gel'fand spectrum and filters.} We already saw that for all projections $\P\in\PV$ and all elements $\ld$ of the Gel'fand spectrum $\Si_V$ of $V$, it holds that $\ld(\P)\in\{0,1\}$. Given $\ld\in\Si_V$, it is easy to construct a maximal filter $F_\ld$ in the projection lattice $\PV$ of $V$: let
\begin{equation}
			F_\ld:=\{\P\in\PV\mid\ld(\P)=1\}.
\end{equation}
It is clear that $\P\in F_\ld$ implies $\Q\in F_\ld$ for all projections $\Q\geq\P$. Let $\P_{1},\P_{2}\in F_{\ld}$. From the fact that $\ld$ is a multiplicative state of $V$, we obtain $\ld(\P_{1}\wedge\P_{2})=\ld(\P_{1}\P_{2})=\ld(\P_{1})\ld(\P_{2})=1$, so $\P_{1}\wedge\P_{2}\in F_{\ld}$. Finally, we have $1=\ld(\hat 1)=\ld(\P+\hat 1-\P)=\ld(\P)+\ld(\hat 1-\P)$ for all $\P\in\PV$, so either $\P\in F_\ld$ or $\hat 1-\P\in F_\ld$ for all $\P$, which shows that $F_{\ld}$ is a maximal filter in $\PV$, indeed. It is straightforward to see that the mapping
\begin{align}			\label{Eq_MaxFilterFromld}
			\beta:\Si_V &\longrightarrow \mathcal{Q}(V)\\			\nonumber
			\ld &\longmapsto F_\ld
\end{align}
from the Gel'fand spectrum of $V$ to the set $\mathcal{Q}(V)$ of maximal filters in $\mathcal{P}(V)$ is injective. This is the first step in the construction of a homeomorphism between these two spaces. De Groote has shown the existence of such a homeomorphism in Thm. 3.2 of \cite{deG05c}.

Given a filter $F$ in the projection lattice $\PV$ of some context $V$, one can `extend' it to a filter in the projection lattice $\PN$ of the full non-abelian algebra $\N$ of physical quantities. One simply defines the \textit{$\N$-cone over $F$} as
\begin{equation}
			\mc C_{\N}(F):=\uparrow\!\!F=\{\Q\in\PN \mid \exists\P\in F:\P\leq\Q\}.
\end{equation}
This is the smallest filter in $\PN$ containing $F$.

We need the following technical, but elementary lemma:
\begin{lemma}			\label{L_deltaT^-1(D)=Cone(D)}
Let $\N$ be a von Neumann algebra, and let $\mc{T}$ be a von Neumann subalgebra such that $\hat 1_{\mc{T}}=\hat 1_\N$. Let 
\begin{align}
			\delta_{\mc{T}}^{i}:\PN &\longrightarrow \mc{P(T)}\\			\nonumber
			\P &\longmapsto \dastoi{\mc{T}}{P}:=\bigvee\{\Q\in\mc{P(T)} \mid \Q\leq\P\}.
\end{align}			
Then, for all filters $F\in\mc{F}(\mc{T})$,
\begin{equation}
			(\delta_{\mc{T}}^{i})^{-1}(F)=\mc{C}_{\N}(F).
\end{equation}
\end{lemma}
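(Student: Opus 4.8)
The plan is to prove the set equality by unwinding both sides into explicit conditions on a projection $\hat R\in\PN$ and then checking the two inclusions separately. On the left, $\hat R\in(\delta^i_{\mc T})^{-1}(F)$ means exactly that $\dastoi{\mc T}{R}\in F$, where by definition $\dastoi{\mc T}{R}=\bigvee\{\Q\in\mc{P(T)}\mid\Q\leq\hat R\}$. On the right, $\hat R\in\mc C_\N(F)=\;\uparrow\!\!F$ means there exists $\P\in F$ with $\P\leq\hat R$. Before comparing these, I would record two elementary facts about inner daseinisation: first, since every projection in the defining set lies below $\hat R$, their join does too, so $\dastoi{\mc T}{R}\leq\hat R$; and second, that this join may be computed equivalently in the complete lattice $\mc{P(T)}$ or in $\PN$, because the join of a family of projections in a von Neumann algebra is the projection onto the closed span of their ranges, which lands in any von Neumann subalgebra containing the family. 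This also ensures that $\dastoi{\mc T}{R}$ is a well-defined element of $\mc{P(T)}$.

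For the inclusion $\mc C_\N(F)\subseteq(\delta^i_{\mc T})^{-1}(F)$, I would take $\P\in F$ with $\P\leq\hat R$. Since $F\subseteq\mc{P(T)}$, the projection $\P$ belongs to the set $\{\Q\in\mc{P(T)}\mid\Q\leq\hat R\}$, so $\P\leq\dastoi{\mc T}{R}$. Now $\dastoi{\mc T}{R}\in\mc{P(T)}$ lies above an element of $F$, and since $F$ is a filter, hence upward closed in $\mc{P(T)}$, this forces $\dastoi{\mc T}{R}\in F$, i.e. $\hat R\in(\delta^i_{\mc T})^{-1}(F)$.

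For the reverse inclusion $(\delta^i_{\mc T})^{-1}(F)\subseteq\mc C_\N(F)$, I would invoke the first recorded fact: assuming $\dastoi{\mc T}{R}\in F$, the projection $\P:=\dastoi{\mc T}{R}$ is itself a witness, since $\P\in F$ and $\P\leq\hat R$, whence $\hat R\in\;\uparrow\!\!F=\mc C_\N(F)$. The main obstacle, such as it is, resides entirely in the two preliminary facts about the join: one must be careful that the supremum defining $\dastoi{\mc T}{R}$ neither leaves $\mc{P(T)}$ nor overshoots $\hat R$, both of which rest on the standard description of joins of projections in a von Neumann algebra. Once these are in hand the inclusions are immediate, which is exactly why the lemma can be flagged as technical but elementary.
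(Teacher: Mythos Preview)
Your proof is correct and follows essentially the same route as the paper's: both directions rest on the two facts $\dastoi{\mc T}{R}\leq\hat R$ and (for $\Q\in\mc{P(T)}$ with $\Q\leq\hat R$) $\Q\leq\dastoi{\mc T}{R}$, combined with upward closure of $F$. The only cosmetic difference is that the paper establishes the inclusion $(\delta^i_{\mc T})^{-1}(F)\subseteq\mc C_\N(F)$ by contrapositive, whereas you give the direct argument using $\dastoi{\mc T}{R}$ itself as the witness in $F$; your version is arguably cleaner, and your explicit remarks on why the join stays in $\mc{P(T)}$ and below $\hat R$ are a welcome clarification the paper leaves implicit.
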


\begin{proof}
If $\Q\in F\subset\mc{P(T)}$, then $(\delta_{\mc{T}}^{i})^{-1}(\Q)=\{\P\in\PN\mid\delta^{i}(\hat{P})_{\mc{T}}=\Q\}$. Let $\P\in\PN$ be such that there is a $\Q\in F$ with $\Q\leq\P$, i.e., $\P\in\mc{C}_{\N}(F)$. Then $\delta^{i}(\hat{P})_{\mc{T}}\geq\Q$, which implies $\delta^{i}(\hat{P})_{\mc{T}}\in F$, since $F$ is a filter in $\mc{P(T)}$. This shows that $\mc{C}_{\N}(F)\subseteq(\delta_{\mc{T}}^{i})^{-1}(F)$. Now let $\P\in\PN$ be such that there is no $\Q\in F$ with $\Q\leq\P$. Since $\delta^{i}(\P)_{\mc{T}}\leq\P$, there also is no $\hat{Q}\in F$ with $\Q\leq\delta^{i}(\P)_{\mc{T}}$, so $\P\notin(\delta_{\mc{T}}^{i})^{-1}(F)$. This shows that $(\delta_{\mc{T}}^{i})^{-1}(F)\subseteq\mc{C}_{\N}(F)$.
\end{proof}

\paragraph{Definition of antonymous and observable functions.}
Let $\N$ be a von Neumann algebra, and let $\A$ be a self-adjoint operator in $\A$. As before, we denote the spectral family of $\A$ as $(\hat E^\A_r)_{r\in\bbR}=\hat E^\A$.
\begin{definition}
The \emph{antonymous function of $\A$} is defined as
\begin{align}
			g_\A:\mc F (\N) &\longrightarrow \spec{\A}\\
			F &\longmapsto \on{sup}\{r\in\bbR \mid \hat 1-\hat E^\A_r\in F\}.
\end{align}
The \emph{observable function of $\A$} is
\begin{align}
			f_\A:\mc F (\N) &\longrightarrow \spec{\A}\\
			F &\longmapsto \on{inf}\{r\in\bbR \mid \hat E^\A_r\in F\}.
\end{align}
\end{definition}
It is straightforward to see that the range of these functions indeed is the spectrum $\spec{\A}$ of $\A$.

We observe that the mapping described in \eq{dastoiVA} can be generalised: when approximating a self-adjoint operator $\A$ with respect to the spectral order in a von Neumann subalgebra of $\N$, then one can use an arbitrary, not necessarily abelian subalgebra $\mc T$ of the non-abelian algebra $\N$. The only condition is that the unit elements in $\N$ and $\mc T$ coincide. We define
\begin{align}
			\delta^i_{\mc T}:\N_\sa &\longrightarrow \mc T_\sa\\		\nonumber
			\A &\longmapsto \dastoi{\mc T}{A}=\bigvee\{\hat B\in\mc T_\sa \mid \hat B\leq_s\A\}.
\end{align}
Analogously, generalising \eq{dastooVA}
\begin{align}
			\delta^o_{\mc T}:\N_\sa &\longrightarrow \mc T_\sa\\		\nonumber
			\A &\longmapsto \dastoo{\mc T}{A}=\bigwedge\{\hat B\in\mc T_\sa \mid \hat B\geq_s\A\}.
\end{align}
The following proposition is central:
\begin{proposition}			\label{P_g_AEncodesAllg_delta^i(A)}
Let $\A\in\N_{sa}$. For all von Neumann subalgebras $\mc{T}\subseteq\N$ such that $\hat 1_{\mc{T}}=\hat 1_\N$ and all filters $F\in\mc{F(T)}$, we have
\begin{equation}
			g_{\dastoi{\mc{T}}{A}}(F)=g_{\A}(\mc{C}_{\N}(F)).
\end{equation}
\end{proposition}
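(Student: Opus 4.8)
The plan is to unfold both sides of the claimed identity into suprema over $r\in\bbR$ of membership conditions in the filter $F$, and then to reduce everything to a purely order-theoretic comparison of two index sets. First I would rewrite the right-hand side. By definition, $g_\A(\mc C_\N(F))=\on{sup}\{r\in\bbR\mid\hat 1-\hat E^\A_r\in\mc C_\N(F)\}$. By Lemma \ref{L_deltaT^-1(D)=Cone(D)} we have $\mc C_\N(F)=(\delta^i_{\mc T})^{-1}(F)$, so $\hat 1-\hat E^\A_r\in\mc C_\N(F)$ holds precisely when $\dasinn{\hat 1-\hat E^\A_r}{\mc T}\in F$. Using the elementary complementation relation $\dasinn{\hat 1-\hat P}{\mc T}=\hat 1-\dasout{\hat P}{\mc T}$ for daseinisation of projections (which follows from De Morgan's law in the complete orthocomplemented lattice $\mc{P(T)}$), this membership condition becomes $\hat 1-\dasout{\hat E^\A_r}{\mc T}\in F$. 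Writing $\hat G_r:=\dasout{\hat E^\A_r}{\mc T}$, the right-hand side is thus $\on{sup}\{r\mid\hat 1-\hat G_r\in F\}$.

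Next I would treat the left-hand side. The operator $\dastoi{\mc T}{A}$ is given by the generalised spectral-order approximation, and its spectral family is the $\mc T$-analogue of \eq{Eq_SpecFamInnDas}, namely $\hat E^{\dastoi{\mc T}{A}}_r=\bigwedge_{s>r}\dasout{\hat E^\A_s}{\mc T}=\bigwedge_{s>r}\hat G_s=:\tilde G_r$. Hence $g_{\dastoi{\mc T}{A}}(F)=\on{sup}\{r\mid\hat 1-\tilde G_r\in F\}$. Setting $S_1:=\{r\mid\hat 1-\tilde G_r\in F\}$ and $S_2:=\{r\mid\hat 1-\hat G_r\in F\}$, the whole proposition reduces to the order-theoretic claim $\on{sup}S_1=\on{sup}S_2$.

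This last step I would carry out using only monotonicity of $r\mapsto\hat G_r$ (the spectral family is increasing and $\delta^o$ is order-preserving) together with the upward-closure of $F$, so that no distributivity is needed and the argument survives for non-abelian $\mc T$. Since $\hat G$ is increasing, $\tilde G_r=\bigwedge_{s>r}\hat G_s\geq\hat G_r$, so $\hat 1-\tilde G_r\leq\hat 1-\hat G_r$; as $F$ is upward closed this gives $S_1\subseteq S_2$, hence $\on{sup}S_1\leq\on{sup}S_2$. For the reverse inequality, fix $r\in S_2$ and any $r'<r$: then $\hat G_r$ occurs among the terms of the meet defining $\tilde G_{r'}$, so $\tilde G_{r'}\leq\hat G_r$ and therefore $\hat 1-\tilde G_{r'}\geq\hat 1-\hat G_r\in F$, whence upward-closure gives $r'\in S_1$. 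Thus $\on{sup}S_1\geq r'$ for every $r'<r$, so $\on{sup}S_1\geq r$, and taking the supremum over $r\in S_2$ yields $\on{sup}S_1\geq\on{sup}S_2$.

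The main obstacle is exactly the mismatch between $S_1$ and $S_2$ created by the right-continuity correction $\bigwedge_{s>r}$ in the spectral family of $\dastoi{\mc T}{A}$: the two sets are genuinely different, and only their suprema coincide, which is what the $\epsilon$-type squeeze above is designed to capture. The remaining ingredients — the complementation identity relating $\delta^o$ and $\delta^i$ on projections, and the spectral-family formula for the generalised approximation $\delta^i_{\mc T}$ — are routine once one notes that $\mc{P(T)}$ is a complete orthocomplemented lattice and that spectral-order meets correspond to pointwise meets of (right-continuous) spectral families.
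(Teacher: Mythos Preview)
Your proposal is correct and follows essentially the same route as the paper's proof: both use the spectral-family formula \eq{Eq_SpecFamInnDas} for $\dastoi{\mc T}{A}$, the complementation identity $\dasinn{\hat 1-\P}{\mc T}=\hat 1-\dasout{\P}{\mc T}$, and Lemma~\ref{L_deltaT^-1(D)=Cone(D)} to rewrite the filter condition in terms of the cone $\mc C_\N(F)$. The only difference is that the paper passes from $\sup\{r\mid\hat 1-\bigwedge_{s>r}\dasout{\hat E^{\A}_s}{\mc T}\in F\}$ to $\sup\{r\mid\hat 1-\dasout{\hat E^{\A}_r}{\mc T}\in F\}$ in a single unargued line, whereas you supply the explicit $S_1/S_2$ squeeze that justifies this step; your version is more complete on precisely the point you identified as the main obstacle.
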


\begin{proof}
We have%
\begin{eqnarray*}
			g_{\delta^{i}(\A)_{\mc{T}}}(F) &=&  \sup\{r\in\bbR \mid \hat{1}-\hat{E}_{r}^{\dastoi{\mc{T}}{A}}\in F\}\\
			&\stackrel{\eq{Eq_SpecFamInnDas}}{=}&  \sup\{r\in\bbR \mid \hat{1}-\bigwedge_{\mu>r}\delta^{o}(\hat{E}_{\mu}^{\hat A})_{\mc{T}}\in F\}\\
			&=&  \sup\{r\in\bbR \mid \hat{1}-\delta^{o}(\hat{E}_{r}^{\hat A})_{\mc{T}}\in F\}\\
			&=&  \sup\{r\in\bbR \mid \delta^{i}(\hat{1}-\hat{E}_{r}^{\hat A})_{\mc{T}}\in F\}\\
			&=&  \sup\{r\in\bbR \mid \hat{1}-\hat{E}_{r}^{\hat A}\in(\delta_{\mc{T}}^{i})^{-1}(F)\}\\
			&\stackrel{\text{Lemma \ref{L_deltaT^-1(D)=Cone(D)}}}{=}&  \sup\{r\in\bbR \mid \hat{1}-\hat{E}_{r}^{\hat A}\in\mc{C}_{\N}(F)\}\\
			&=&  g_{\A}(\mc{C}_{\N}(F))
\end{eqnarray*}
\end{proof}

This shows that the antonymous function $g_{\A}:\mc{F}(\N)\rightarrow\spec{\A}$ of $\A$ encodes in a simple way \emph{all} the antonymous functions $g_{\delta^{i}(\A)_{\mc{T}}}:\mc{F(T)}\rightarrow\spec{\delta^{i}(\A)}_{\mc{T}}$ corresponding to the approximations $\delta^{i}(\A)_{\mc{T}}$ (from below in the spectral order) to $\A$ to von Neumann subalgebras $\mc{T}$ of $\N$. If, in particular, $\mc{T}$
is an abelian subalgebra, then the set $\mc{Q(T)}$ of \emph{maximal} filters in $\mc{D(T)}$ can be identified with the Gel'fand spectrum of $\mc{T}$ (see \cite{deG05b}), and $g_{\delta^{i}(\A)_{\mc{T}}}|_{\mc{Q(T)}}$ can be identified with the Gel'fand transform of
$\delta^{i}(\A)_{\mc{T}}$. Thus we get

\begin{corollary}
The antonymous function $g_{\A}:\mc{F}(\N)\rightarrow\spec{\A}$ encodes all the Gel'fand transforms of the inner daseinised operators of the form $\dastoi{V}{A}$, where $V\in\VN$ is an abelian von Neumann subalgebra of $\N$ (such that $\hat 1_V=\hat 1_\N$). Concretely, if $\mc{Q}(V)$ is the space of maximal filters in $\PV$ and $\Si_V$ is the Gel'fand spectrum of $V$, then we have an identification $\beta:\Sig_V\rightarrow\mc{Q}(V)$, $\ld\mapsto F_\ld$ (see \eq{Eq_MaxFilterFromld}). Let $\fu{\dastoi{V}{A}}:\Si_V\rightarrow\spec{\dastoi{V}{A}}$ be the Gel'fand transform of $\dastoi{V}{A}$. We can identify $\fu{\dastoi{V}{A}}$ with $g_{\dastoi{V}{A}}|_{\mc{Q}(V)}$, and hence, from Prop. \ref{P_g_AEncodesAllg_delta^i(A)},
\begin{equation}			\label{EqInnDasFromAnton}
			\fu{\dastoi{V}{A}}(\ld)=g_{\dastoi{V}{A}}(F_\ld)=g_{\A}(\mc{C}_{\N}(F_\ld)).
\end{equation}
\end{corollary}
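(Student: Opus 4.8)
The plan is to reduce the corollary to Proposition \ref{P_g_AEncodesAllg_delta^i(A)} by supplying the single missing ingredient: the identification of the Gel'fand transform of a self-adjoint operator living in an abelian algebra with the restriction of its antonymous function to the maximal filters. Specialising the proposition to the case $\mc{T}=V$, an abelian von Neumann subalgebra with $\hat 1_V=\hat 1_\N$, and to the particular filter $F=F_\ld$ (which is maximal by the argument preceding \eq{Eq_MaxFilterFromld}), immediately yields the second equality $g_{\dastoi{V}{A}}(F_\ld)=g_\A(\mc{C}_\N(F_\ld))$. Hence the entire content still to be verified is the first equality, $\fu{\dastoi{V}{A}}(\ld)=g_{\dastoi{V}{A}}(F_\ld)$.

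To establish this I would set $\hat B:=\dastoi{V}{A}$, which lies in $V_\sa$, so that its spectral projections $\hat E^{\hat B}_r$ all lie in $\PV$ and $\fu{B}(\ld)=\ld(\hat B)$ by definition of the Gel'fand transform. The key observation is that the condition ``$\hat 1-\hat E^{\hat B}_r\in F_\ld$'' occurring in the definition of $g_{\hat B}$ can be decoded via the valuation property of $\ld$. Since ``$\hat 1-\hat E^{\hat B}_r\in F_\ld$'' is equivalent to $\ld(\hat 1-\hat E^{\hat B}_r)=1$, hence to $\ld(\hat E^{\hat B}_r)=0$, I would apply the FUNC principle \eq{FUNC} to the Borel characteristic function $\chi_{(-\infty,r]}$, using $\hat E^{\hat B}_r=\chi_{(-\infty,r]}(\hat B)$, to obtain $\ld(\hat E^{\hat B}_r)=\chi_{(-\infty,r]}(\ld(\hat B))$. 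This shows $\ld(\hat E^{\hat B}_r)=0$ precisely when $r<\ld(\hat B)$.

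Consequently the set $\{r\in\bbR \mid \hat 1-\hat E^{\hat B}_r\in F_\ld\}$ equals the half-line $(-\infty,\ld(\hat B))$, whose supremum is $\ld(\hat B)$. By the definition of the antonymous function this gives $g_{\hat B}(F_\ld)=\ld(\hat B)=\fu{B}(\ld)$, which is exactly the required identification $g_{\dastoi{V}{A}}(F_\ld)=\fu{\dastoi{V}{A}}(\ld)$. Chaining this with the specialised proposition then produces the displayed formula \eq{EqInnDasFromAnton}.

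The main obstacle here is conceptual bookkeeping rather than hard analysis: one must be careful that $\beta$ identifies $\Si_V$ with the \emph{maximal} filters $\mc{Q}(V)$, not with all of $\mc{F}(V)$, and that the antonymous function $g_{\dastoi{V}{A}}$, a priori a function on filters of $V$, is being restricted to these maximal filters so that it genuinely matches a function on $\Si_V$. The sole analytic input is the FUNC principle for the spectral projections $\hat E^{\hat B}_r$, which collapses the supremum in the definition of the antonymous function to the evaluation $\ld(\hat B)$; everything else is a direct consequence of Proposition \ref{P_g_AEncodesAllg_delta^i(A)}.
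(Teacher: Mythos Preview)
Your proposal is correct and follows exactly the route the paper intends: the second equality is the specialisation of Proposition \ref{P_g_AEncodesAllg_delta^i(A)} to $\mc T=V$ and $F=F_\ld$, and the first equality is the identification of the Gel'fand transform with the antonymous function restricted to maximal filters. The paper does not give a separate proof of the corollary at all --- it simply asserts the identification $\fu{\dastoi{V}{A}}\leftrightarrow g_{\dastoi{V}{A}}|_{\mc Q(V)}$ by reference to de Groote's work \cite{deG05b,deG05c} --- whereas you supply an explicit verification via the FUNC principle, which makes your argument more self-contained than the paper's.
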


Not surprisingly, there is a similar result for observable functions and outer daseinisation. This result was first proved in a similar form by de Groote in \cite{deG07}.

\begin{proposition}			\label{P_f_AEncodesAllf_delta^o(A)}
Let $\A\in\N_{sa}$. For all von Neumann subalgebras $\mc{T}\subseteq\N$ such that $\hat 1_{\mc{T}}=\hat 1_\N$ and all
filters $F\in\mc{F(T)}$, we have
\begin{equation}
			f_{\dastoo{\mc{T}}{A}}(F)=f_{\A}(\mc{C}_{\N}(F)).
\end{equation}
\end{proposition}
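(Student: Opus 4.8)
The plan is to mirror the computation carried out for Proposition~\ref{P_g_AEncodesAllg_delta^i(A)}, replacing antonymous functions by observable functions and inner by outer daseinisation; in fact the observable--function version turns out to be slightly cleaner. First I would simply unfold the definition of the observable function on the left-hand side,
\[
			f_{\dastoo{\mc{T}}{A}}(F)=\on{inf}\{r\in\bbR \mid \hat E^{\dastoo{\mc{T}}{A}}_r\in F\}.
\]
The first substantive step is then to express the spectral projections of the outer daseinised operator $\dastoo{\mc{T}}{A}$ through the spectral projections of $\A$ itself. Here I would invoke equation~\eq{Eq_SpecFamOutDas}, which states that $\hat E^{\dastoo{\mc{T}}{A}}_r=\dasinn{\hat E^\A_r}{\mc{T}}$, i.e.\ the spectral family of $\dastoo{\mc{T}}{A}$ is obtained by applying the projection-level inner daseinisation $\delta^i_{\mc{T}}$ to each spectral projection $\hat E^\A_r$ of $\A$. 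Crucially, this spectral family is already right-continuous, so---in contrast to the antonymous--function case---no auxiliary meet $\bigwedge_{s>r}$ intervenes and there is no right-continuity correction to absorb. This is the one point at which the outer version is genuinely simpler than the inner one.

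Next I would convert the membership condition $\dasinn{\hat E^\A_r}{\mc{T}}\in F$ into a condition on $\hat E^\A_r$ alone. This is exactly where Lemma~\ref{L_deltaT^-1(D)=Cone(D)} enters: by definition of preimage, $\delta^i_{\mc{T}}(\hat E^\A_r)\in F$ is equivalent to $\hat E^\A_r\in(\delta_{\mc{T}}^{i})^{-1}(F)$, and the lemma identifies $(\delta_{\mc{T}}^{i})^{-1}(F)$ with the $\N$-cone $\mc{C}_{\N}(F)$. Hence the infimum becomes
\[
			\on{inf}\{r\in\bbR \mid \hat E^\A_r\in\mc{C}_{\N}(F)\},
\]
which is, by definition of the observable function, precisely $f_\A(\mc{C}_{\N}(F))$. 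Assembling these equalities into a single displayed chain, annotated with \eq{Eq_SpecFamOutDas} and Lemma~\ref{L_deltaT^-1(D)=Cone(D)} at the relevant steps, completes the argument.

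I do not expect a serious obstacle, since the two nontrivial ingredients---the description~\eq{Eq_SpecFamOutDas} of the spectral family of $\dastoo{\mc{T}}{A}$ and the cone identity of Lemma~\ref{L_deltaT^-1(D)=Cone(D)}---are both already at hand. The only point deserving a moment's care is to check that successively replacing the indexing projections inside the infimum (first by their inner daseinisations, then by membership in the cone) leaves the infimum unchanged. This is immediate here because each equivalence $\dasinn{\hat E^\A_r}{\mc{T}}\in F\Leftrightarrow\hat E^\A_r\in\mc{C}_{\N}(F)$ holds separately for every $r\in\bbR$, so the two index sets coincide as subsets of $\bbR$ and their infima are literally equal; none of the limiting subtlety that the meet $\bigwedge_{s>r}$ forces in the antonymous case appears.
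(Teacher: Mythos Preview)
Your proposal is correct and follows essentially the same approach as the paper's own proof: unfold the definition of $f_{\dastoo{\mc{T}}{A}}$, use \eq{Eq_SpecFamOutDas} to replace $\hat E^{\dastoo{\mc{T}}{A}}_r$ by $\dasinn{\hat E^\A_r}{\mc{T}}$, rewrite the membership condition as $\hat E^\A_r\in(\delta^i_{\mc{T}})^{-1}(F)$, apply Lemma~\ref{L_deltaT^-1(D)=Cone(D)}, and recognise $f_\A(\mc{C}_\N(F))$. Your observation that no right-continuity correction is needed here (in contrast to the antonymous case) is also accurate and explains why this chain is one step shorter than the proof of Proposition~\ref{P_g_AEncodesAllg_delta^i(A)}.
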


\begin{proof}
We have
\begin{eqnarray}
			f_{\dastoo{\mc{T}}{A}}(F) &=& \inf\{r\in\bbR \mid \hat{E}_{r}^{\dastoo{\mc{T}}{A}}\in F\}\\
			&\stackrel{\eq{Eq_SpecFamOutDas}}{=}& \inf\{r\in\bbR \mid \delta^{i}(\hat{E}_{r}^{\hat A})_{\mc{T}}\in F\}\\
			&=& \inf\{r\in\bbR \mid \hat{E}_{r}^{\hat A}\in(\delta_{\mc{T}}^{i})^{-1}(F)\}\\
			&\stackrel{\text{Lemma \ref{L_deltaT^-1(D)=Cone(D)}}}{=}& \inf\{r\in\bbR \mid \hat{E}_{r}^{\hat A}\in\mc{C}_{\N}(F)\}\\
			&=& f_{\A}(\mc{C}_{\N}(F)).
\end{eqnarray}
\end{proof}

This shows that the observable function $f_\A:\mc{F}(\N)\rightarrow\spec{\A}$ of $\A$ encodes in a simple way \emph{all} the observable functions $f_{\dastoo{\mc{T}}{A}}:\mc{F}(\N)\rightarrow\spec{\dastoo{\mc{T}}{A}}$ corresponding to approximations $\dastoo{\mc{T}}{A}$ (from above in the spectral order) to $\A$ to von Neumann subalgebras $\mc{T}$ of $\N$. If, in particular, $\mc{T}=V$ is an abelian subalgebra, then we get
\begin{corollary}
The observable function $f_{\A}:\mc{F}(\N)\rightarrow\spec{\A}$ encodes all the Gel'fand transforms of the outer daseinised operators of the form $\dastoo{V}{A}$, where $V\in\VN$ is an abelian von Neumann subalgebra of $\N$ (such that $\hat 1_V=\hat 1_\N$). Using the identification $\beta:\Sig_V\rightarrow\mc{Q}(V)$, $\ld\mapsto F_{\ld}$ between the Gel'fand and the space of maximal filters (see \eq{Eq_MaxFilterFromld}), we can identify the Gel'fand transform $\fu{\dastoo{V}{A}:\Sig_V\rightarrow
\spec{\dastoo{V}{A}}}$ of $\dastoo{V}{A}$ with $f_{\dastoo{V}{A}}|_{\mc{Q(T)}}$, and hence, from Prop. \ref{P_f_AEncodesAllf_delta^o(A)},
\begin{equation}
			\fu{\dastoo{V}{A}}(\ld)=f_{\dastoo{V}{A}}(F_{\ld})=f_\A(\mc{C}_{\N}(F_{\ld})).
\end{equation}
\end{corollary}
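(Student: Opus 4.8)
The plan is to deduce this Corollary from Proposition \ref{P_f_AEncodesAllf_delta^o(A)} by inserting one additional, standard identity: for an operator in an abelian von Neumann algebra, the Gel'fand transform evaluated at $\ld$ agrees with the observable function evaluated at the maximal filter $F_\ld=\beta(\ld)$. Precisely, I would first prove that for every $\hat B\in V_\sa$ and every $\ld\in\Si_V$,
\begin{equation}
			\fu{B}(\ld)=f_{\hat B}(F_\ld),
\end{equation}
where $F_\ld=\{\P\in\PV\mid\ld(\P)=1\}$ as in \eq{Eq_MaxFilterFromld}. Specialising to $\hat B=\dastoo{V}{A}$ then gives $\fu{\dastoo{V}{A}}(\ld)=f_{\dastoo{V}{A}}(F_\ld)$, and Proposition \ref{P_f_AEncodesAllf_delta^o(A)}, applied with $\mc T=V$ and the filter $F=F_\ld\in\mc F(V)$, rewrites the right-hand side as $f_\A(\mc C_\N(F_\ld))$, which is exactly the chain of equalities claimed.

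For the key identity I would argue as follows. Since $V$ is a von Neumann algebra it is closed under the Borel functional calculus, so every spectral projection $\hat E^{\hat B}_r=\chi_{(-\infty,r]}(\hat B)$ lies in $\PV$ and hence $\ld(\hat E^{\hat B}_r)\in\{0,1\}$. Applying the FUNC principle \eq{FUNC} to the Borel function $\chi_{(-\infty,r]}$ gives $\ld(\hat E^{\hat B}_r)=\chi_{(-\infty,r]}(\ld(\hat B))$, so $\hat E^{\hat B}_r\in F_\ld$ if and only if $\ld(\hat B)\leq r$. Consequently $\{r\in\bbR\mid\hat E^{\hat B}_r\in F_\ld\}=[\ld(\hat B),\infty)$, and taking the infimum yields $f_{\hat B}(F_\ld)=\ld(\hat B)=\fu{B}(\ld)$.

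The remaining ingredients are essentially bookkeeping and are already available: that $\beta$ of \eq{Eq_MaxFilterFromld} gives the correct identification of $\Si_V$ with $\mc Q(V)$ is quoted from de Groote, and that the relevant restriction of $f_{\dastoo{V}{A}}$ to $\mc Q(V)$ is what is being compared to the Gel'fand transform is immediate once the identity above is in hand. The one place calling for care is the endpoint $r=\ld(\hat B)$: I would invoke the right-continuity of the spectral family $r\mapsto\hat E^{\hat B}_r$ (so that $\chi_{(-\infty,r]}$, closed at $r$, is the correct choice) to ensure that the projection at $r=\ld(\hat B)$ itself belongs to $F_\ld$, so that the infimum is genuinely attained and equals $\ld(\hat B)$ rather than some strictly smaller number. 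Beyond this, I expect no real obstacle, since the whole argument is just the standard translation between the Gel'fand transform and the spectral family of a self-adjoint operator in an abelian von Neumann algebra.
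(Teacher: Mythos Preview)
Your proposal is correct and follows the same route the paper indicates: specialise Proposition~\ref{P_f_AEncodesAllf_delta^o(A)} to $\mc T=V$ and $F=F_\ld$, and combine with the identification of the Gel'fand transform of an operator in $V$ with the observable function restricted to maximal filters. The paper does not spell out a proof of this corollary; it simply quotes the identification $\fu{\dastoo{V}{A}}\leftrightarrow f_{\dastoo{V}{A}}|_{\mc Q(V)}$ from de~Groote's work and invokes Proposition~\ref{P_f_AEncodesAllf_delta^o(A)}, whereas you actually supply the short spectral-calculus argument (via the FUNC principle applied to $\chi_{(-\infty,r]}$) establishing $\fu{B}(\ld)=f_{\hat B}(F_\ld)$ for $\hat B\in V_\sa$. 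So your approach is not different in spirit, just more explicit than the paper's.
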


\paragraph{Vector states from elements of Gel'fand spectra.} As emphasised in section \ref{SubS_PhysQuantitiesAsArrows}, an element of $\ld$ of the Gel'fand spectrum $\Sig_V$ of some context $V$ is only a local state: it is a pure state of the abelian subalgebra $V$, but it is not a state of the full non-abelian algebra $\N$ of physical quantities. We will now show how certain local states $\ld$ can be extended to global states, i.e., states of $\N$.

Let $V$ be a context that contains at least one rank-$1$ projection.\footnote{Depending on whether the von Neumann algebra $\N$ contains rank-$1$ projections, there may or may not be such contexts. For the case $\N=\BH$, there of course are many rank-$1$ projections and hence contexts $V$ of the required form.} Let $\P$ be such a projection, and let $\psi\in\Hi$ be a unit vector such that $\P$ is the projection onto the ray $\bbC\psi$. ($\psi$ is fixed up to a phase.) We write $\P=\P_\psi$. Then
\begin{equation}			\label{F_P_psi}
			F_{\P_\psi}:=\{\Q\in\PV \mid \Q\geq\P_\psi\}
\end{equation}
clearly is an ultrafilter in $\PV$. It thus corresponds to some element $\ld_\psi$ of the Gel'fand spectrum $\Sig_V$ of $V$. Being an element of the Gel'fand spectrum, $\ld_\psi$ is a pure state of $V$. The cone
\begin{equation}
			\mc C_{\N}(F_{\P_\psi})=\{\Q\in\PN \mid \Q\geq\P_\psi\}
\end{equation}
is a maximal filter in $\PN$. It contains all projections in $\N$ that represent propositions ``$\Ain\De$'' that are (totally) true in the vector state $\psi$ on $\N$.\footnote{Here, we again use the usual identification between unit vectors and vector states, $\psi\mapsto\bra\psi\_\ket\psi$.} This vector state is uniquely determined by the cone $\mc C_{\N}(F_{\P_\psi})$. The `local state' $\ld_\psi$ on the context $V\subset\N$ can thus be extended to a `global' vector state. On the level of filters, this extension corresponds to the cone construction.

For a finite-dimensional Hilbert space $\Hi$, one necessarily has $\N=\BH$. Let $V$ be an arbitrary maximal context, and let $\ld\in\Sig_V$. Then $\ld$ is of the form $\ld=\ld_{\P_\psi}$ for some unit vector $\psi\in\Hi$ and corresponding rank-$1$ projection $\P_\psi$. Hence, every such $\ld_{\P_\psi}$ can be extended to a vector state $\psi$ on the whole of $\BH$.

\paragraph{The eigenstate-eigenvalue link.} We will now show that the arrow $\dasB{\A}$ constructed from a self-adjoint operator $\A\in\N_{\sa}$ preserves the eigenstate-eigenvalue link in a suitable sense. We employ the relation between ultrafilters in $\PV$ of the form $F_{\P_\psi}$ (see equation \eq{F_P_psi}) and maximal filters in $\PN$ established in the previous paragraph, but `read it backwards'.

Let $\A$ be some self-adjoint operator, and let $\psi$ be an eigenstate of $\A$ with eigenvalue $a$. Let $V$ be a context that contains $\A$, and let $\P_\psi$ be the projection determined by $\psi$, i.e., the projection onto the one-dimensional subspace (ray) $\bbC\psi$ of Hilbert space. Then $\P_\psi\in V$ from the spectral theorem. Consider the maximal filter
\begin{equation}
			F:=\{\Q\in\PN \mid \Q\geq\P_\psi\}
\end{equation}
in $\PN$ determined by $\P_\psi$. Clearly, $F\cap\PV=\{\Q\in\PV \mid \Q\geq\P_\psi\}$ is an ultrafilter in $\PV$, namely the ultrafilter $F_{\P_\psi}$ from equation \eq{F_P_psi}, and $F=\mc C_{\N}(F_{\P_\psi})$. Let $\ld_\psi$ be the element of $\Sig_V$ (i.e., local state of $V$) determined by $F_{\P_\psi}$. The state $\ld_\psi$ is nothing but the restriction of the vector state $\bra\psi\_\ket\psi$ (on $\N$) to the context $V$, and hence 
\begin{equation}
			\bra\psi\A\ket\psi=\ld_\psi(\A)=a.
\end{equation}
Since $\A\in V_{\sa}$, we have $\dastoi{V}{A}=\dastoo{V}{A}=\A$. This implies that 
\begin{align}			\nonumber
			\dasB{\A}_V(\ld_\psi)(V) &= [g_{\dastoi{V}{A}}(\ld_\psi),f_{\dastoo{V}{A}}(\ld_\psi)]\\			\nonumber
			&= [g_\A(\ld_\psi),f_\A(\ld_\psi)]\\
			&= [\fu{A}(\ld_\psi),\fu{A}(\ld_\psi)]\\		\nonumber
			&= [\ld_\psi(\A),\ld_\psi(\A)]\\		\nonumber
			&= [a,a],
\end{align}
i.e., the arrow $\dasB{\A}$ delivers the (interval only containing the) eigenvalue $a$ for an eigenstate $\psi$ (resp. the corresponding $\ld_\psi$) at a context $V$ which actually contains $\A$. In this sense, the eigenstate-eigenvalue link is preserved, and locally at $V$ the value of $\A$ actually becomes a single real number as expected.

\begin{example}			\label{Ex_ValueOfS_z}
We return to our example, the spin-$1$ system. It is described by the algebra $\mc{B}(\bbC^3)$ of bounded operators on the three-dimensional Hilbert space $\bbC^3$. In particular, the spin-$z$ operator is given by
\begin{equation}
			\hat S_z=\frac{1}{\sqrt{2}}
			\left(\begin{array}
						[c]{ccc}%
						1 & 0 & 0\\
						0 & 0 & 0\\
						0 & 0 & -1
			\end{array}\right).
\end{equation}
This is the matrix expression for $\hat S_z$ with respect to the basis $e_1,e_2,e_3$ of eigenvectors of $\hat S_z$. Let $\P_1,\P_2,\P_3$ be the corresponding rank-$1$ projections onto the eigenspaces. The algebra $V_{\hat S_z}:=\{\P_1,\P_2,P_3\}''$ generated by these projections is a maximal abelian subalgebra of $\mc{B}(\bbC^3)$, and it is the only maximal abelian subalgebra containing $\hat S_z$.

Let us now consider how the approximations $\dastoi{V}{S_z}$ and $\dastoo{V}{S_z}$ of $\hat S_z$ to other contexts $V$ look like. In order to do so, we first determine the antonymous and the observable function of $\hat S_z$. The spectral family of $\hat S_z$ is given by
\begin{equation}			\label{SpecFamOfSz}
			\hat E^{\hat S_z}_\ld=\left\{
			\begin{tabular}
						[c]{ll}%
						$\hat 0$ & if $\ld<-\frac{1}{\sqrt 2}$\\
						$\P_3$ & if $-\frac{1}{\sqrt 2}\leq\ld<0$\\
						$\P_3+\P_2$ & if $0\leq\ld<\frac{1}{\sqrt 2}$\\
						$\hat 1$ & if $\ld\geq \frac{1}{\sqrt 2}$.
			\end{tabular}
			\ \right.
\end{equation}
The antonymous function $g_{\hat S_z}$ of $\hat S_z$ is
\begin{align}
			g_{\hat S_z}:\mc{F}(\mc{B}(\bbC^3)) &\longrightarrow \spec{\hat S_z}\\    \nonumber
			F &\longmapsto \on{sup}\{r\in\bbR \mid \hat 1-\hat E^{\hat S_z}_r\in F\},
\end{align}
and the observable function is
\begin{align}
			f_{\hat S_z}:\mc{F}(\mc{B}(\bbC^3)) &\longrightarrow \spec{\hat S_z}\\    \nonumber
			F &\longmapsto \on{inf}\{r\in\bbR \mid \hat E^{\hat S_z}_r\in F\}.
\end{align}
Let $\Q_1,\Q_2,\Q_3$ be three pairwise orthogonal rank-$1$ projections, and let $V=\{\Q_1,\Q_2,\Q_3\}''\in\mc V (\mc B (\bbC^3))$ be the maximal context determined by them. Then the Gel'fand spectrum $\Sig_V$ has three elements; let us denote them $\ld_1,\ld_2$ and $\ld_3$ (where $\ld_i(\Q_j)=\delta_{ij}$). Clearly, $\Q_i$ is the smallest projection in $V$ such that $\ld_i(\Q_i)=1\;(i=1,2,3)$. Each $\ld_i$ defines a maximal filter in the projection lattice $\PV$ of $V$:
\begin{equation}
			F_{\ld_i}=\{\Q\in\PV \mid \Q\geq\Q_i\}.
\end{equation}
The Gel'fand transform $\fu{\dastoi{V}{S_z}}$ of $\dastoi{V}{S_z}$ is a function from the Gel'fand spectrum $\Sig_V$ of $V$ to the spectrum of $\hat S_z$. Equation \eq{EqInnDasFromAnton} shows how to calculate this function from the antonymous function $g_{\hat S_z}$:
\begin{equation}
				\fu{\dastoi{V}{S_z}}(\ld_i)=g_{\hat S_z}(\mc C_{\mc B (\bbC^3)}(F_{\ld_i})),
\end{equation}
where $\mc C_{\mc B (\bbC^3)}(F_{\ld_i})$ is the cone over the filter $F_{\ld_i}$, i.e.,
\begin{align}			\nonumber
			\mc C_{\mc B (\bbC^3)}(F_{\ld_i}) 
			&= \{\hat R\in\mc P (\mc B(\bbC^3)) \mid \exists \Q\in F_{\ld_i}:\hat R\geq\Q\}\\
			&= \{\hat R\in\mc P (\mc B(\bbC^3)) \mid \hat R\geq\Q_i\}.
\end{align}
Now it is easy to actually calculate $\fu{\dastoi{V}{S_z}}$: for all $\ld_i\in\Sig_V$,
\begin{align}			\label{deltai(Sz)FromAnton}	\nonumber
			\fu{\dastoi{V}{S_z}}(\ld_i)
			&= \on{sup}\{r\in\bbR \mid \hat 1-\hat E^{\hat S_z}_r\in\mc C_{\mc B (\bbC^3)}(F_{\ld_i})\}\\
			&= \on{sup}\{r\in\bbR \mid \hat 1-\hat E^{\hat S_z}_r\geq\Q_i\}.
\end{align}
Analogously, we obtain $\fu{\dastoo{V}{S_z}}$: for all $\ld_i$,
\begin{align}			\label{deltao(Sz)FromAnton}	\nonumber
			\fu{\dastoo{V}{S_z}}(\ld_i)
			&= \on{inf}\{r\in\bbR \mid \hat E^{\hat S_z}_r\in\mc C_{\mc B (\bbC^3)}(F_{\ld_i})\}\\
			&= \on{inf}\{r\in\bbR \mid \hat E^{\hat S_z}_r\geq\Q_i\}.
\end{align}
Using the expression \eq{SpecFamOfSz} for the spectral family of $\hat S_z$, we can see directly that the values $\fu{\dastoi{V}{S_z}}(\ld_i),\fu{\dastoo{V}{S_z}}(\ld_i)$ lie in the spectrum of $\hat S_z$ as expected. A little less obviously, $\fu{\dastoi{V}{S_z}}(\ld_i)\leq\fu{\dastoo{V}{S_z}}(\ld_i)$, so we can think of the pair of values as an interval $[\fu{\dastoi{V}{S_z}}(\ld_i),\fu{\dastoo{V}{S_z}}(\ld_i)]$.

We had assumed that $V=\{\Q_1,\Q_2,\Q_2\}''$ is a \emph{maximal} abelian subalgebra, but this is no actual restriction. A non-maximal subalgebra $V'$ has a Gel'fand spectrum $\Sig_{V'}$ consisting of two elements. All arguments work analogously. The important point is that for each element $\ld$ of the Gel'fand spectrum, there is a unique projection $\Q$ in $V'$ corresponding to $\ld$, given as the smallest projection in $V'$ such that $\ld(\Q)=1$.

This means that we can now write down explicitly the natural transformation $\dasB{S_z}:\Sig\rightarrow\Rlr$, the arrow in the presheaf topos representing the physical quantity `spin in $z$-direction', initially given by the self-adjoint operator $\hat S_z$. For each context $V\in\mc V (\mc B (\bbC^3))$, we have a function
\begin{align}			\label{dasB(S_z)_V}
			\dasB{S_z}_V:\Sig_V &\longrightarrow \Rlr_V\\			\nonumber
			\ld &\longmapsto (\mu_{\ld},\nu_{\ld}),
\end{align}
compare \eq{dasB(A)_V}. According to \eq{mu_ld}, $\mu_\ld:\downarrow\!\!V\rightarrow\bbR$ is given as
\begin{align}
				\mu_\ld:\downarrow\!\!V &\longrightarrow \bbR\\			 \nonumber
				V' &\longmapsto \ld(\delta^i(\hat S_z)_{V'})=\fu{\delta^i(\hat S_z)_{V'}}(\ld).
\end{align}
Let $\Q_{V'}\in\mc P (V')$ be the projection in $V'$ corresponding to $\ld$, that is, the smallest projection in $V'$ such that $\ld(\Q_{V'})=1$. Note that this projection $\Q_{V'}$ depends on $V'$. For each $V'\in\downarrow\!\!V$, the element $\ld$ (originally an element of the Gel'fand spectrum $\Sig_V$) is considered as an element of $\Sig_{V'}$, given by the restriction $\ld|_{V'}$ of the original $\ld$ to the smaller algebra $V'$. Then \eq{deltai(Sz)FromAnton} implies, for all $V'\in\downarrow\!\!V$,
\begin{equation}
				\mu_\ld(V')=\on{sup}\{r\in\bbR \mid \hat 1-\hat E^{\hat S_z}\geq\Q_{V'}\}.
\end{equation}
From \eq{nu_ld}, we obtain
\begin{align}
				\nu_\ld:\downarrow\!\!V &\longrightarrow \bbR\\			\nonumber
				V' &\longmapsto \ld(\delta^o(\hat S_z)_{V'})=\fu{\delta^o(\hat S_z)_{V'}}(\ld),
\end{align}
and with \eq{deltao(Sz)FromAnton}, we get for all $V'\in\downarrow\!\!V$,
\begin{equation}
				\nu_\ld(V')=\on{inf}\{r\in\bbR \mid \hat E^{\hat S_z}\geq\Q_{V'}\}.
\end{equation}

We finally want to calculate the `value' $\dasB{\hat S_z}(\ps\wpsi)$ of the physical quantity `spin in $z$-direction' in the (pseudo-)state $\ps\wpsi$. As described in section \ref{SubS_PhysQuantitiesAsArrows}, $\psi$ is a unit vector in the Hilbert space (resp. a vector state) and $\ps\wpsi=\ps{\delta(\P_\psi)}$ is the corresponding pseudo-state, a subobject of $\Sig$. The `value' $\dasB{\hat S_z}(\ps\wpsi)$ is given at $V\in\mc V (\mc B (\bbC^3))$ as
\begin{equation}
            (\dasB{\hat S_z}(\ps\wpsi))_V=\dasB{\hat S_z}_V(\ps\wpsi_V)
            =\{\dasB{\hat S_z}_V(\ld) \mid \ld\in\ps\wpsi_V\}.
\end{equation}
Here, $\ps\wpsi_V=\alpha(\delta^o(\P_\psi)_V)=S_{\delta^o(\P_\psi)_V}$ is a (clopen) subset of $\Sig_V$ (compare \eq{alpha}). This means that for each context $V$, the state determines a collection $\ps\wpsi_V$ of elements of the Gel'fand spectrum $\Sig_V$ of $V$. We then evaluate the component $\dasB{\hat S_z}_V$ of the arrow/natural transformation representing spin-$z$, given by \eq{dasB(S_z)_V} (and subsequent equations), on all the $\ld\in\ps\wpsi_V$ to obtain the component at $V$ of the `value' $\dasB{\hat S_z}(\ps\wpsi)$. Each $\ld\in\ps\wpsi$ gives a sequence of intervals; one interval for each $V'\in\downarrow\!\!V$ such that if $V''\subset V'$, the interval at $V''$ contains the interval at $V'$.
\end{example}

The example can easily be generalised to other operators and higher dimensions. Other finite-dimensional Hilbert spaces present no further conceptual difficulty at all. Of course, infinite-dimensional Hilbert spaces bring a host of new technical challenges, but the main tools used in the calculation, the antonymous and observable functions ($g_\A$ resp. $f_\A$), are still available. Since they encode the approximations in the spectral order of an operator $\A$ to \emph{all} contexts $V\in\VN$, the natural transformation $\dasB{\A}$ corresponding to a self-adjoint operator $\A$ can be written down efficiently, without the need to actually calculate the approximations to all contexts separately.

\section{Conclusion}	\label{S6}
We have shown how daseinisation relates central aspects of the standard Hilbert space formalism of quantum theory to the topos formalism. Daseinisation of projections gives subobjects of the spectral presheaf $\Sig$. These subobjects form a Heyting algebra, and every pure state allows to assign truth-values to all propositions. The resulting new form of quantum logic is contextual, multi-valued and intuitionistic. For more details on the logical aspects, see \cite{DI08,Doe07b,Doe09}.

The daseinisation of self-adjoint operators gives arrows from the state object $\Sig$ to the presheaf $\Rlr$ of `values'. Of course, the `value' $\dasB{\A}(\ps\wpsi)$ of a physical quantity in a pseudo-state $\ps\wpsi$ is considerably more complicated than the value of a physical quantity in classical physics, which is just a real number. The main point, though, is that in the topos approach all physical quantities \emph{do} have (generalised) values in any given state -- something that clearly is not the case in ordinary quantum mechanics. Moreover, we have shown in Example \ref{Ex_ValueOfS_z} how to calculate the `value' $\dasB{\hat S_z}(\ps\wpsi)$.

Recently, Heunen, Landsman and Spitters suggested a closely related scheme using topoi in quantum theory (see \cite{HLS09} as well as their contribution to this volume and references therein). All the basic ingredients are the same: a quantum system is described by an algebra of physical quantities, in their case a Rickart $C^*$-algebra, associated with this algebra is a spectral object whose subobjects represent propositions, and physical quantities are represented by arrows in a topos associated with the quantum system. The choice of topos is very similar to ours: as the base category, one considers all abelian subalgebras (i.e., contexts) of the algebra of physical quantities and orders them partially under inclusion. In our scheme, we choose the topos to be contravariant, $\Set$-valued functors (called \emph{presheaves}) over the context category, while Heunen et al. choose \emph{co}variant functors. The use of covariant functors allows the construction of the spectral object as the topos-internal Gel'fand spectrum of a topos-internal abelian $C^*$-algebra canonically defined from the external non-abelian algebra of physical quantities. 

Despite the similarities, there are some important conceptual and interpretational differences between the original contravariant approach and the covariant approach. These differences and their physical consequences will be discussed in a forthcoming article \cite{Doe09b}.

Summing up, the topos approach provides a reformulation of quantum theory in a way that seemed impossible up to now due to powerful no-go theorems like the Kochen-Specker theorem. In spite of the Kochen-Specker theorem, there is a suitable notion of state `space' for a quantum system in analogy to classical physics: the spectral presheaf. 

\paragraph{Acknowledgements.} I would like to thank Chris Isham for the great collaboration and his constant support. I am very grateful to Hans Halvorson for organising the very enjoyable \textsl{Deep Beauty} meeting.

\end{document}